\newcommand{\ldist}{\mathsf {dist}}
\theoremstyle{plain}
\definecolor{darkgreen}{rgb}{0,0.5,0}
\definecolor{darkblue}{rgb}{0,0,0.8}
\newcommand{\ignore}[1]{}
\algnewcommand\algorithmicswitch{\textbf{switch}}
\algnewcommand\algorithmiccase{\textbf{case}}
\newcommand{\LOCAL}{\ensuremath{\mathsf{LOCAL}}\xspace}
\newcommand{\eps}{\varepsilon}
\renewcommand{\epsilon}{\varepsilon}
\newcommand{\poly}{\operatorname{\text{{\rm poly}}}}
\newcommand{\set}[1]{\left\{#1\right\}}
\newcommand{\hide}[1]{}
\newcommand{\FullOrShort}{short}
  \newcommand{\fullOnly}[1]{#1}
  \newcommand{\shortOnly}[1]{}
  \newcommand{\shortOnly}[1]{#1}
  \newcommand{\fullOnly}[1]{}
\renewcommand{\phi}{\varphi}
\title{Improved Distributed Fractional Coloring Algorithms}
\author{Alkida Balliu}{Univeristy of Freiburg}{alkida.balliu@cs.uni-freiburg.de}{}{}
\author{Fabian Kuhn}{Univeristy of Freiburg}{kuhn@cs.uni-freiburg.de}{}{}
\author{Dennis Olivetti}{Univeristy of Freiburg}{dennis.olivetti@cs.uni-freiburg.de}{}{}
\authorrunning{A.\ Balliu \and F.\ Kuhn \and D.\ Olivetti} 
\keywords{distributed graph algorithms, distributed coloring,
  locality, fractional coloring}
\begin{document}

\maketitle

\begin{abstract}
  We prove new bounds on the distributed fractional coloring problem in the \LOCAL model. A fractional $c$-coloring of a graph $G=(V,E)$ is a fractional covering of the nodes of $G$ with independent sets such that each independent set $I$ of $G$ is assigned a fractional value $\lambda_I\in[0,1]$. The total value of all independent sets of $G$ is at most $c$, and for each node $v\in V$, the total value of all independent sets containing $v$ is at least $1$. Equivalently, fractional $c$-colorings can also be understood as multicolorings as follows. For some natural numbers $p$ and $q$ such that $p/q\leq c$, each node $v$ is assigned a set of at least $q$ colors from $\set{1,\dots,p}$ such that adjacent nodes are assigned disjoint sets of colors. The minimum $c$ for which a fractional $c$-coloring of a graph $G$ exists is called the fractional chromatic number $\chi_f(G)$ of $G$.

  Recently, [Bousquet, Esperet, and Pirot; SIROCCO '21] showed that for any constant $\eps>0$, a fractional $(\Delta+\eps)$-coloring can be computed in $\Delta^{O(\Delta)} + O(\Delta\cdot\log^* n)$ rounds. We show that such a coloring can be computed in only $O(\log^2 \Delta)$ rounds, without any dependency on $n$.

  We further show that in $O\big(\frac{\log n}{\eps}\big)$ rounds, it is possible to compute a fractional $(1+\eps)\chi_f(G)$-coloring, even if the fractional chromatic number $\chi_f(G)$ is not known. That is, the fractional coloring problem can be approximated arbitrarily well by an efficient algorithm in the \LOCAL model. For the standard coloring problem, it is only known that an $O\big(\frac{\log n}{\log\log n}\big)$-approximation can be computed in polylogarithmic time in the \LOCAL model. We also show that our distributed fractional coloring approximation algorithm is best possible. We show that in trees, which have fractional chromatic number $2$, computing a fractional $(2+\eps)$-coloring requires at least $\Omega\big(\frac{\log n}{\eps}\big)$ rounds.

    We finally study fractional colorings of regular grids. In [Bousquet, Esperet, and Pirot; SIROCCO '21], it is shown that in regular grids of bounded dimension, a fractional $(2+\eps)$-coloring can be computed in time $O(\log^* n)$. We show that such a coloring can even be computed in $O(1)$ rounds in the \LOCAL model.
\end{abstract}

\section{Introduction \& Related Work}

The distributed graph coloring problem is at the heart of the area of distributed graph algorithms and it is one of the prototypical problems to study distributed symmetry breaking. Already in \cite{linial1987LOCAL}, Linial showed that deterministically coloring a ring network with $O(1)$ colors and thus more generally coloring $n$-node graphs of maximum degree $\Delta$ with a number of colors that only depends on $\Delta$ requires $\Omega(\log^* n)$ rounds. In \cite{naor91}, Naor extended this lower bound to randomized algorithms. Subsequently, over the last three decades, the distributed coloring problem has been studied intensively and we now have a quite good understanding of the complexity of the problem. Mostly, researchers focused on the problem of computing a coloring with $\Delta+1$ colors, i.e., with the number of colors that can be obtained by a simple sequential greedy algorithm. In light of the $\Omega(\log^* n)$ lower bounds of \cite{linial1987LOCAL,naor91}, there is a long line of research to (deterministically) solve $(\Delta+1)$-coloring in time $f(\Delta)+O(\log^* n)$ for some function $f$ (see, e.g., \cite{goldberg88,szegedy93,Kuhn2006On,BEK15,barenboim2016deterministic,fraigniaud16,BEG18,MausTonoyan20}), where the current best bound of $O(\sqrt{\Delta\log\Delta}+\log^* n)$ was proven in \cite{fraigniaud16,BEG18,MausTonoyan20}. The complexity of distributed $(\Delta+1)$-coloring has also been studied as a function of $n$. The randomized complexity has been known to be $O(\log n)$ since the 1980s~\cite{luby86,linial1987LOCAL,alon86,Johansson99} and it has recently been improved to $O(\log^3\log n)$~\cite{barenboim_symmbreaking,harris2016coloring,chang2018optimal,GhaffariKuhn21} and even to $O(\log^* n)$ for graphs of maximum degree $\Delta\geq\log^{2+\eps}n$~\cite{HNT21}. For a long time, the best deterministic $(\Delta+1)$-coloring algorithms had a complexity of $2^{O(\sqrt{\log n})}$~\cite{awerbuch89,panconesi-srinivasan,kuhn2020faster} and it was only recently shown in a breakthrough paper by Rozho\v{n} and Ghaffari~\cite{rozhonghaffari20} that the distributed $(\Delta+1)$-coloring problem (and many other distributed graph problems) can be solved in time $\poly\log n$ deterministically. Subsequently, the deterministic complexity of the distributed $(\Delta+1)$-coloring problem has even been improved to $O(\log^2\Delta \cdot\log n)$~\cite{GhaffariKuhn21}.

While much of the existing work on distributed vertex coloring is on the $(\Delta+1)$-coloring problem, it is of course also relevant to understand the complexity of more restrictive or more relaxed variants of the problem, for example by considering vertex colorings with more or fewer colors. Already in \cite{linial1987LOCAL}, Linial showed that an $O(\Delta^2)$-coloring can be computed deterministically in time only $O(\log^* n)$. Over the years, there were several papers that considered distributed coloring algorithms to color graphs with at least $\Delta+1$ colors~(e.g., \cite{barenboim10,BEK15,kuhn2020faster,ElkinPS15,SchneiderW10,kothapalli06,Maus21}). One however needs to use $\omega(\Delta)$ colors to obtain significantly faster distributed coloring algorithms. Colorings with less than $\Delta+1$ colors however require significantly more time. In \cite{linial1987LOCAL}, Linial shows that even on $\Delta$-regular trees, computing an $O(\sqrt{\Delta})$-coloring requires $\Omega(\log_\Delta n)$ rounds deterministically.\footnote{Linial uses an explicit construction of regular high-girth graphs with large chromatic number for his lower bound, but he remarks that by using the right probabilistic construction, the lower bound on the number of colors can be improved to $\Omega(\Delta/\log\Delta)$.} This was improved in \cite{brandt}, where it is shown that even computing a $\Delta$-coloring of $\Delta$-regular trees requires $\Omega(\log_\Delta n)$ rounds deterministically and $\Omega(\log_\Delta \log n)$ rounds with randomization. There are algorithms that nearly match those bounds~\cite{panconesi95delta,GhaffariHKM21,GhaffariKuhn21}. Further, by using network decompositions~\cite{awerbuch89,panconesi-srinivasan,linial93,elkin16_decomp,rozhonghaffari20}, it is possible to efficiently approximate the best possible vertex coloring~\cite{barenboim15_decomp}.\footnote{This approach exploits the standard distributed communication models, which in particular allow unbounded internal computations at all nodes.} In particular, in $\poly\log n$ rounds, it is possible to compute a coloring of a graph $G$ with $O\big(\frac{\log n}{\log\log n}\big)\cdot \chi(G)$ colors.

Another natural relaxation of the vertex coloring problem is the fractional coloring problem. A $c$-coloring of the nodes $V$ of a graph $G=(V,E)$ can be seen as a partition of $V$ into $c$ independent sets. A fractional $c$-coloring is an assignment of positive weights $\lambda_I$ to the independent sets $I$ of $G$ such that for every node $v\in V$, the total weight of the independent sets that contain $v$ is equal to (at least) $1$ and such that the total weight of all independent sets is equal to $c$. The smallest $c$ for which a graph $G$ has a fractional $c$-coloring is called the fractional chromatic number $\chi_f(G)$ of $G$. Alternatively, a fractional coloring can be defined as a multicoloring as follows. For two integer parameters $p$ and $q$ ($p\geq q$), a $(p:q)$-coloring is an assignment of (at least) $q$ colors to each node such that adjacent nodes are assigned disjoint sets of colors and such that the total number of distinct colors is equal to $p$. A $(p:q)$-coloring directly gives a fractional $(p/q)$-coloring and for any graph $G$ with fractional chromatic number $\chi_f(G)$, there exists a pair of integers $p$ and $q$ for which a $(p:q)$-coloring of $G$ with $p/q=\chi_f(G)$ exists.

From the distributed computing perspective, we believe that fractional colorings are interesting and relevant for two reasons. First, in some cases, where graph colorings are needed in practice, one can also use a fractional coloring. For example, if $G$ describes the possible conflicts between the wireless transmissions of nodes in a radio network, a $c$ coloring of the nodes of $G$ can be used to obtain a TDMA schedule for the nodes of $G$. The length of such a schedule is equal to the number of colors $c$ and every node can therefore be active in a $\frac{1}{c}$-fraction of all time slots. A $(p:q)$-coloring of the nodes of $G$ can also directly be used to obtain a TDMA schedule of length $p$ and in which every node is active in $q$ of the time slots (i.e., in all the time slots corresponding to its colors). Each node is therefore active in a $q/p$-fraction of all time slots. By computing a fractional coloring instead of a standard coloring, we can therefore potentially increase the fraction of active slots for each node and thus also the usage of the communication channel. Further, understanding the complexity of distributed fractional coloring will generally improve our understanding of the complexity of distributed coloring: what parts of the difficulty in computing vertex colorings stem from the fact that we need to assign exactly one color to every node (and that we thus need to break symmetries) and what parts of the difficulties remain if we compute fractional colorings, where we can ``average'' over a possibly larger total number of colors.

We are aware of three previous publications that studied the distributed fractional coloring problem. In \cite{stacs09}, it is shown that in any graph $G$, a fractional $(\mathrm{degree}+1)$-coloring with support $N!$ can be computed in a single deterministic communication round (where $N$ is the number of IDs). That is, there is a $1$-round algorithm that computes a multicoloring with $N!$ colors such that every node $v$ gets assigned a set of at least $N!/(\deg_G(v)+1)$ colors. That is, when considering fractional colorings, the standard $(\Delta+1)$-coloring problem can be solved in a single time step. It is further shown that a fractional $(1+\varepsilon)(\mathrm{degree}+1)$-coloring with support $O(\Delta^2\log N / \varepsilon^2)$ can also be computed in $1$ round. In both bounds, $N$ can be replaced by $C$ if an initial proper $C$-coloring of $G$ is given. Similar results were  also shown in \cite{HasemannHRS16}. Very recently, Bousquet, Esperet, and Pirot~\cite{BousquetEP21} made some interesting further progress on the distributed fractional coloring problem.

In \cite{BousquetEP21}, it is shown that although a fractional $(\Delta+1)$-coloring can be computed in a single communication round, in $\Delta$-regular graphs that do not contain $K_{\Delta+1}$ as a subgraph, the fractional $\Delta$-coloring problem requires time $\Omega(\log_\Delta n)$ deterministically and $\Omega(\log_\Delta\log n)$ rounds with randomization. That is, for the fractional $\Delta$-coloring problem, the same lower bounds as for the standard $\Delta$-coloring problem hold.\footnote{It has to be noted that for the standard $\Delta$-coloring problem, the lower bounds hold in $\Delta$-regular trees, but for the fractional $\Delta$-coloring problem, the known lower bounds (presented in \cite{BousquetEP21}) only hold for graphs in which every node is contained in some $K_{\Delta}$.} It is also shown that in graphs that do not contain $(\Delta+1)$-cliques, a $(q\Delta+1:q)$-coloring can be computed in time $O(q^3\Delta^{2q} + q\log^* n)$ deterministically. By setting $q=1/\eps$, this implies that for any $\eps>0$,  a fractional $(\Delta+\eps)$-coloring can be computed in time $O\big(\Delta^{O(1/\eps)}+ \frac{1}{\eps}\cdot\log^*n\big)$.
In addition, the paper shows that, for any constant $\epsilon>0$ and constant integer $d> 0$, in regular $d$-dimensional grids, it is possible to compute a fractional $(2+\epsilon)$-coloring in $O(\log^* n)$ rounds. Hence, while in bounded degree graphs, deterministically computing a fractional $\Delta$-coloring requires $\Omega(\log n)$ rounds, one can get arbitrarily close in only $O(\log^* n)$ rounds.  Moreover, in graphs from a minor-closed family of graphs and with sufficiently large  girth, it is possible to compute a fractional $(2+\varepsilon)$-coloring in $O\big(\frac{\log n}{\eps}\big)$ rounds, for any constant $\epsilon > 0$. Those results in particular imply that in some graphs, fractional colorings that are arbitrarily close to the best such colorings can be computed.

In this paper, we improve on the results of \cite{BousquetEP21} in several ways. We here give an overview over our results, for a detailed statement of the results, we refer to \Cref{sec:results}. First, we improve the time to compute a fractional $(\Delta+\eps)$-coloring. We show that a coloring of the same quality as in \cite{BousquetEP21} (i.e., a $(q\Delta+1:q)$-coloring for $q=1/\eps$) can be computed in time $O\big(\frac{1}{\eps^2}\cdot \sqrt{\Delta}\cdot\poly\log\Delta + \frac{1}{\eps}\cdot\log^* n\big)$ and a slightly worse $(q\Delta:q-1)$-coloring for $q=\Theta(\Delta/\eps)$ can be computed deterministically in $O\big(\log^2\big(\frac{1}{\eps}\big)\cdot\sqrt{\Delta}\poly\log\Delta + \big(1+\log_{\Delta}\frac{1}{\eps}\big)\cdot\log^* n\big)$ communication rounds. Moreover, if we further increase the total number of colors, a fractional $(\Delta+\eps)$-coloring can even be computed deterministically in time $O\big(\log^2\Delta +\log^2\big(\frac{1}{\eps}\big) + \log^3\big(\frac{1}{\eps}\big) / \log \Delta \big)$, i.e., we can improve the time dependency on $\Delta$ to polylogarithmic and we can drop the dependency on the number of nodes $n$ altogether. We further show that the dependency on $n$ can also be removed in the algorithm for fractionally coloring grids. In $d$-dimensional grids, for constant $\eps>0$ and constant $d$, a $(2+\eps)$ can be computed in $O(1)$ time. In addition, we study the problem of computing fractional colorings that are arbitrarily close to the best possible fractional colorings. For any $\eps>0$, we show that it is always possible to deterministically compute a fractional $(1+\eps)\cdot\chi_f(G)$-coloring of a graph $G$ in time $O\big(\frac{\log n}{\eps}\big)$ and we show that computing such a fractional coloring on trees requires $\Omega\big(\frac{\log n}{\eps}\big)$ rounds even with randomization. Note that this is in contrast to the standard coloring problem for which we are not aware of a $\poly\log n$-time algorithm that computes an approximation to the minimum vertex coloring problem with an approximation ratio that is better than $O\big(\frac{\log n}{\log\log n}\big)$. 

The remainder of this paper is organized as follows. In \Cref{sec:definitions} we describe the \LOCAL model of distributed computing and we give some useful definitions. \Cref{sec:results} contains the detailed statements of our contributions. In \Cref{sec:genericResults} we present some generic results that are then used in our algorithms. \Cref{sec:fastalgorithm} contains deterministic algorithms for computing a fractional $(\Delta + \varepsilon)$-coloring. In \Cref{sec:approximation} we show randomized and deterministic algorithms for computing arbitrarily good approximations of the chromatic number of a graph. Then, in \Cref{sec:lowerbound} we present a lower bound of $\Omega(\log n /\varepsilon)$ rounds for computing a fractional $(2+\varepsilon)$-coloring. Finally, \Cref{sec:grids} contains our constant-time algorithm for fractional $(2+\varepsilon)$-coloring on $d$-dimensional grids.

\section{Model and Definitions}
\label{sec:definitions}

\subsection{LOCAL model}
The model of computation that we consider is the well-known \LOCAL model of distributed computing. A distributed network is modeled as a graph where nodes are the computing entities, and edges represent communication links. Each node is equipped with a unique identifier (ID) from $\{1,\dotsc,n^c\}$ where $n$ is the total number of nodes in the graph and $c\ge 1$ is a constant. Initially, each node knows its own ID and degree, the maximum degree $\Delta$ of the graph, and the total number $n$ of the nodes. The computation proceeds in synchronous rounds, where at each round each node sends messages to its neighbors, receives messages from its neighbors, and performs some local computation. In the \LOCAL model the size of the messages and the local computation is not bounded. We say that an algorithm correctly solves a task in this model (e.g., a vertex coloring) in time $T$ if each node provides a local output (e.g., a color) within $T$ communication rounds, and the local outputs together yield a correct global solution (e.g., a proper coloring). In the randomized version of the \LOCAL model, additionally, each node is equipped with a random bit string. In this paper we will consider both Monte Carlo and Las Vegas randomized algorithms. A $T$-rounds Monte Carlo algorithm must always terminate within $T$ rounds, and the global output it produces must be correct with high probability, that is, with probability at least $1 - 1/n^c$, for an arbitrary constant $c \ge 1$. A $T$-rounds Las Vegas algorithm must terminate within $T$ rounds with high probability, and it must always produce a correct solution. Notice that, since the size of the messages is not bounded, we can see a $T$-round deterministic or a randomized Monte Carlo algorithm that runs at some node $u$ as a mapping of the $T$-hop neighborhood of $u$ into an output. This does not hold for Las Vegas algorithms, since there the running time is only bounded with high probability.

\subsection{Definitions}
We start by defining the notion of $(p:q)$-coloring. Informally, this coloring is an assignment of colors to the nodes of a graph, such that the colors come from a palette of $p$ colors, and such that to each node are assigned at least $q$ different colors. Neighboring nodes must have disjoint sets of colors.

\begin{definition}[$(p:q)$-coloring]
	Let $p\geq q\geq 1$. A $(p:q)$-coloring of a graph $G=(V,E)$ is an assignment of a set $X_v\subset [p]$ to each node $v \in V$ such that for all $v\in V$, $|X_v| \ge q$, and for all edges $\{u,v\}\in E$, $X_u\cap X_v=\emptyset$.
\end{definition}

Sometimes we are not interested in the total number of colors, but just in the ratio between the total number of colors and the number of colors assigned to the nodes. This notion is captured by the definition of fractional coloring.
\begin{definition}[Fractional $c$-coloring]
	A fractional $c$-coloring is a $(p:q)$-coloring satisfying $p/q \le c$.
\end{definition}
Given a $(p:q)$-coloring, we call $p$ the \emph{support} of the coloring. Naturally, apart from minimizing the ratio $p/q$ and the time for computing a fractional coloring, we also want to minimize the support $p$.

The minimum value $c$ for which there exists a fractional $c$-coloring is called fractional chromatic number.
\begin{definition}[Fractional chromatic number]
	The fractional chromatic number $\chi_f(G)$ of a graph $G$ is defined as
	\[
	\chi_f(G) := \inf\left\{\frac{p}{q}\,:\,G\text{ has a }(p:q)\text{-coloring}\right\} = \min\left\{\frac{p}{q}\,:\,G\text{ has a }(p:q)\text{-coloring}\right\}.
	\]
\end{definition}

\begin{definition}[Partial coloring]\label{def:partialcoloring}
	A partial $c$-coloring is a coloring of the vertices of a graph such that each node is either colored from a color in $\{1,\ldots,c\}$, or is uncolored. Similarly, a partial $(p:q)$-coloring is a coloring of the vertices of a graph such that each node, either has at least $q$ colors in $\{1,\ldots,p\}$, or is uncolored.
\end{definition}

We now provide some additional definitions that will be useful when describing our algorithms. Given a graph $G$, we denote with $\deg_G(v)$ the degree of node $v$ in $G$.
\begin{definition}[List coloring]
	In the $c_v$-list (vertex) coloring problem, each node $v$ is equipped with a list of arbitrary $c_v$ colors, and the goal is to assign to each node a color from its list, such that the resulting outcome is a proper coloring of the graph $G$. In particular, in the $(\mathrm{degree}+x)$-list coloring problem, each node $v$ has a list of size at least $\deg_G(v) + x$.
\end{definition}

\begin{definition}[Degree-choosability]
 A graph $G=(V,E)$ is \emph{degree-choosable} if it admits a $c_v$-list coloring for any list assignment satisfying $|c_v|\ge \deg_G(v), \forall v\in V$.
\end{definition}

\begin{definition}[Network decomposition]
	A $(c,d)$-network decomposition of the graph $G$ is a partition of the vertices of $G$ into at most $c$ disjoint color classes such that each connected subgraph induced by nodes of color $i$ has strong diameter at most $d$.
\end{definition}

\begin{definition}[Distance]
	Let $G=(V, E)$ be a graph. For a pair of nodes $u,v\in V$, we denote with $\ldist(u,v)$ the hop-distance between $u$ and $v$ in $G$. We also denote the distance between a node $v\in V$ and a set of nodes $S\subseteq V$ as $\ldist(v,S)=\min\{\ldist(v,u)~|~ u\in S \}$.
\end{definition}

\begin{definition}[Ruling set]
	An $(\alpha,\beta)$-ruling set is a set of nodes satisfying that the nodes in the set are at distance at least $\alpha$ between each other, and nodes not in the set are at distance at most $\beta$ from nodes in the set.
\end{definition}

\section{Our Results}
\label{sec:results}

Our first main contribution, presented in  \Cref{sec:fastalgorithm}, is in the improvement of the main algorithm of \cite{BousquetEP21}. In particular, we start by showing that a fractional  $(\Delta+\varepsilon)$-coloring, with small support, can be obtained in a time that depends only polynomially in $\Delta$ and $\varepsilon$. In \cite{BousquetEP21}, this dependency is exponential. In the following, we assume $\Delta \ge 3$ \footnote{We note that trivial adaptations of our algorithms also work for $\Delta=2$.}.
\begin{restatable}{theorem}{fastfractionalcoloring}\label{thm:fastfractionalcoloring}
	A $(q\Delta : q-1)$-coloring, for an arbitrary integer $q > 0$, can be deterministically computed in time $O(\alpha^2\log\Delta \cdot T + \alpha\log^*n)$ in the \LOCAL model, where $T$ is the time required to solve the $(\mathrm{degree}+1)$-list coloring problem given an $O(\Delta^2)$-coloring in input, and where $\alpha=O(1+\log_\Delta q)$.
\end{restatable}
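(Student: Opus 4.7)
The plan is to build the $(q\Delta : q-1)$-coloring by iteratively invoking the $(\mathrm{degree}+1)$-list coloring primitive on carefully chosen list assignments, after a single preprocessing step. First, I would compute, in $O(\log^* n)$ rounds via Linial's algorithm, an $O(\Delta^2)$-proper coloring of $G$; this is reused as the input coloring for every subsequent invocation of the list-coloring subroutine, so each call costs $T$ rounds.

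The conceptual picture is that I want to produce $q$ partial colorings, each using a disjoint block of $\Delta$ colors from the palette $[q\Delta]$, so that every node is properly colored in at least $q-1$ of them. For small $q$ (say $q \le \Delta$, where $\alpha = O(1)$), I would run $O(\log \Delta)$ iterations of list coloring: at each iteration, every still-uncolored node presents as its list the set of colors of some ``fresh'' block unused by its neighbors in that block, and the degree+1 list coloring subroutine is applied to the subgraph where the list strictly exceeds the residual degree. The Borodin--Erd\H{o}s--Rubin characterization of degree-choosable graphs guarantees that the only obstructions are local Gallai-tree sub-blocks (cliques of size $\Delta$ and odd cycles with tight degree), and these can be absorbed by spending a few of the $\Theta(q)$ slack palette colors. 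A geometric shrinking argument on the residual subgraph then yields an $O(\log \Delta)$ bound on the number of iterations, for a base-case cost of $O(\log \Delta \cdot T + \log^* n)$.

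For general $q$, I would reduce to this base case by recursion. Split the palette $[q\Delta]$ into blocks of size $\Theta(\Delta^2)$, apply the base-case algorithm on each block in parallel to obtain $\Theta(\Delta)$ colors per node from each block, and combine the resulting color sets. Each such reduction divides $q$ by $\Theta(\Delta)$, so the recursion has depth $\alpha = O(1 + \log_\Delta q)$. At recursion depth $i$ the intermediate bookkeeping between the different block-colorings forces up to $\alpha$ extra calls to the list-coloring primitive (to coordinate which nodes are still uncovered across blocks via a nested residual-graph analysis), contributing the extra factor of $\alpha$ per level. Combining the $O(\alpha \log \Delta)$ list-coloring calls per level with $O(\log^* n)$ of preprocessing per level and $\alpha$ levels of recursion gives the claimed $O(\alpha^2 \log \Delta \cdot T + \alpha \log^* n)$ bound.

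The main obstacle, I expect, is the Borodin--Erd\H{o}s--Rubin obstruction in the base case: the degree+1 list coloring primitive can color only vertices whose list strictly exceeds their residual degree, whereas we are pushing the palette down to $\Delta$ colors per slot. The ``extra'' slack $q\Delta - q(\Delta-1) = q$ colors must be spent surgically on the tight Gallai-tree sub-blocks, and moreover the spending has to be coordinated across all $q$ slots so that no vertex loses more than one slot in total. Getting this accounting tight enough to preserve the $(q\Delta : q-1)$ ratio while still allowing a geometric decrease of the residual graph per phase is where the real work lies, and I expect to lean on the residual-graph and peeling tools developed in \Cref{sec:genericResults} to carry it out.
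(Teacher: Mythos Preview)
Your proposal has a genuine gap at the recursive step. If you split $[q\Delta]$ into $q/\Delta$ blocks of size $\Delta^2$ and run the base case independently on each, every block is allowed to leave one slot empty per node, so a node may end up with only $(q/\Delta)(\Delta-1)=q-q/\Delta$ colors, not $q-1$. The single missing slot guaranteed by the theorem is a \emph{global} budget, and your recursion spends it once per block. You acknowledge this implicitly when you mention ``coordination across blocks via a nested residual-graph analysis,'' but that coordination is exactly the whole problem, and nothing in your outline explains how to achieve it; the tools in \Cref{sec:genericResults} are about derandomization and amplification of partial colorings, not about peeling or residual graphs. The base case is similarly underspecified: with only $\Delta$ colors per slot, the lists do not exceed the degree, and your ``absorb Gallai-tree obstructions using slack colors'' step is asserted rather than constructed.

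The paper takes a completely different route and never recurses on $q$. It computes an $(\alpha,\alpha^2\log\Delta)$-ruling set with $\alpha=\Theta(1+\log_\Delta q)$ and clusters every node to its nearest center. The key structural lemma (via the expansion lemma of \cite{GhaffariHKM21}) is that each cluster either contains a degree-choosable component, a node of degree $<\Delta$, or at least $q$ nodes. In the last case one picks $q$ distinct designated nodes $v_{C,1},\dots,v_{C,q}$. Then one computes $q$ partial $\Delta$-colorings \emph{in parallel}: in the $i$th coloring, nodes are colored layer by layer from the cluster boundary inward using the $(\mathrm{degree}+1)$-list-coloring primitive (each layer has residual degree $\le\Delta-1$), and only the single designated node $v_{C,i}$ is left uncolored (clusters without $q$ nodes are fully $\Delta$-colored via their degree-choosable component or low-degree node). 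Since the designated nodes are distinct, every node is uncolored in at most one of the $q$ colorings, giving the $(q\Delta:q-1)$ bound directly. The $\alpha^2\log\Delta$ factor is the cluster diameter (hence the number of layers), and the $\alpha\log^* n$ term comes from computing the ruling set on $G^{\alpha-1}$, not from per-level preprocessing.
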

If we set $q = \Theta(\Delta/\varepsilon)$, we obtain the following corollary.
\begin{restatable}{corollary}{corfastfractionalcoloring}
	For any $\epsilon>0$, the fractional $(\Delta+\varepsilon)$-coloring problem, with support $O(\Delta^2/\varepsilon)$, can be solved deterministically in time
	\[
	O\left(\left(\log\Delta + \frac{\log^2(1/\varepsilon)}{\log\Delta}\right)\cdot T + \left(1 + \frac{\log(1/\varepsilon)}{\log\Delta}\right)\cdot\log^* n\right),
	\]
 	where $T$ is the time required to solve the $(\mathrm{degree}+1)$-list coloring problem given an $O(\Delta^2)$-coloring in input.
\end{restatable}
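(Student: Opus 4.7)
The plan is to derive the corollary by a direct substitution into \Cref{thm:fastfractionalcoloring}. I would choose $q = \lceil \Delta/\varepsilon \rceil + 1$, so that
\[
\frac{q\Delta}{q-1} \;=\; \Delta + \frac{\Delta}{q-1} \;\le\; \Delta + \varepsilon.
\]
A $(q\Delta : q-1)$-coloring is therefore in particular a fractional $(\Delta+\varepsilon)$-coloring, and since $q = \Theta(\Delta/\varepsilon)$, its support $p = q\Delta$ is $O(\Delta^2/\varepsilon)$, as required.

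Next, I would plug this value of $q$ into the running-time bound of \Cref{thm:fastfractionalcoloring}. From $q = \Theta(\Delta/\varepsilon)$ we get
\[
\alpha \;=\; O(1 + \log_\Delta q) \;=\; O\!\left(1 + \frac{\log(1/\varepsilon)}{\log \Delta}\right),
\]
and expanding the square in $\alpha^2 \log\Delta$ yields
\[
\alpha^2 \log\Delta \;=\; O\!\left(\log\Delta \;+\; \log(1/\varepsilon) \;+\; \frac{\log^2(1/\varepsilon)}{\log\Delta}\right).
\]
The middle term is (up to a constant) the geometric mean of the two outer terms, so by AM-GM it is absorbed, leaving $O\!\left(\log\Delta + \log^2(1/\varepsilon)/\log\Delta\right)$ as the coefficient multiplying $T$. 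The $\alpha\log^* n$ summand is already in the form $O\!\left((1 + \log(1/\varepsilon)/\log\Delta)\log^* n\right)$, matching the corresponding term in the corollary.

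The argument is essentially a one-line reduction with no new combinatorial content beyond \Cref{thm:fastfractionalcoloring}. The only points requiring care are choosing $q$ so that both the $(\Delta+\varepsilon)$ approximation and the $O(\Delta^2/\varepsilon)$ support come out simultaneously, and verifying that the cross term in the expansion of $\alpha^2$ can indeed be absorbed into the surviving polylogarithmic terms; neither constitutes a real obstacle.
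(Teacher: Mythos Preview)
Your proposal is correct and follows exactly the approach the paper takes: the corollary is obtained by substituting $q = \Theta(\Delta/\varepsilon)$ into \Cref{thm:fastfractionalcoloring}, and your detailed expansion of $\alpha^2\log\Delta$ (with the AM--GM absorption of the cross term) simply spells out what the paper leaves implicit.
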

Since the $(\mathrm{degree}+1)$-list coloring problem, given an $O(\Delta^2)$-coloring, can be solved in $O(\sqrt{\Delta} \poly \log \Delta)$ rounds deterministically \cite{fraigniaud16,BEG18,MausTonoyan20}, we obtain the following corollary. 
\begin{corollary}
	For any constant $\epsilon>0$, the fractional $(\Delta+\varepsilon)$-coloring problem, with support $O(\Delta^2)$, can be solved in $O(\sqrt{\Delta} \poly \log \Delta  + \log^* n)$ deterministic rounds.
\end{corollary}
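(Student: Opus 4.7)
The plan is to derive this bound as an immediate specialization of the preceding corollary, by instantiating the abstract parameter $T$ with a concrete deterministic bound for $(\mathrm{degree}+1)$-list coloring from the literature and then simplifying under the constant-$\varepsilon$ assumption. No new algorithmic idea is required; the proof is essentially a substitution followed by a bookkeeping step to check that the additional inputs assumed by the invoked results can themselves be produced within the claimed budget.

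First, I would invoke the preceding corollary, which yields a fractional $(\Delta+\varepsilon)$-coloring of support $O(\Delta^2/\varepsilon)$ in time
\[
O\!\left(\!\left(\log\Delta + \frac{\log^2(1/\varepsilon)}{\log\Delta}\right)\cdot T + \left(1 + \frac{\log(1/\varepsilon)}{\log\Delta}\right)\cdot\log^* n\right),
\]
where $T$ is the complexity of $(\mathrm{degree}+1)$-list coloring given an $O(\Delta^2)$-coloring on input. Since $\varepsilon$ is a constant we have $\log(1/\varepsilon)=O(1)$, so the multiplicative factor on $T$ collapses to $O(\log\Delta)$, the factor on $\log^* n$ becomes $O(1)$, and the support $O(\Delta^2/\varepsilon)$ simplifies to $O(\Delta^2)$. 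Next, I would plug in $T = O(\sqrt{\Delta}\,\poly\log\Delta)$ using the deterministic $(\mathrm{degree}+1)$-list coloring algorithms of \cite{fraigniaud16,BEG18,MausTonoyan20}, which run in that time when an $O(\Delta^2)$-coloring is provided; multiplying by the $O(\log\Delta)$ overhead only perturbs the polylogarithmic factor, giving $\log\Delta\cdot T = O(\sqrt{\Delta}\,\poly\log\Delta)$.

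The one remaining sanity check is that the initial $O(\Delta^2)$-coloring required as input to the list-coloring subroutine is itself computable within the claimed budget: this follows from Linial's classical algorithm \cite{linial1987LOCAL}, which produces such a coloring in $O(\log^* n)$ deterministic rounds and is therefore absorbed into the additive $\log^* n$ term of the final bound. Combining these pieces yields the claimed running time of $O(\sqrt{\Delta}\,\poly\log\Delta + \log^* n)$ with support $O(\Delta^2)$. The hardest part is really just verifying that the pieces fit together cleanly in the constant-$\varepsilon$ regime; there is no substantive obstacle, as the algorithmic work has already been done in the preceding theorem and corollary and in the cited list-coloring results.
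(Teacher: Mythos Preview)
Your proposal is correct and follows essentially the same approach as the paper: the paper simply states that since $(\mathrm{degree}+1)$-list coloring given an $O(\Delta^2)$-coloring can be solved in $O(\sqrt{\Delta}\,\poly\log\Delta)$ rounds \cite{fraigniaud16,BEG18,MausTonoyan20}, the corollary follows by plugging this into the preceding bound and simplifying under constant $\varepsilon$. Your additional sanity check about computing the initial $O(\Delta^2)$-coloring via Linial is already accounted for in the paper's underlying algorithm and time analysis, so nothing is missing.
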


Then, we show that we can obtain a different tradeoff between the support and the running time.
\begin{restatable}{theorem}{smallsupport}\label{thm:smallsupport}
	A $(q\Delta +1 : q)$-coloring, for an arbitrary integer $q > 0$, can be deterministically computed in time $O(q^2\log\Delta \cdot T + q\log^*n)$ in the \LOCAL model, where $T$ is the time required to solve the $(\mathrm{degree}+1)$-list coloring problem given an $O(\Delta^2)$-coloring in input.
\end{restatable}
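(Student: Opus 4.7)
The plan is to construct a $(q\Delta+1:q)$-coloring iteratively, one color per outer iteration, using the $(\mathrm{degree}+1)$-list-coloring primitive as the workhorse. First, as a one-time preprocessing step, I would compute an $O(\Delta^2)$-coloring of $G$ via Linial's algorithm in $O(\log^* n)$ rounds. This provides the auxiliary coloring required as input to each subsequent $T$-round list-coloring call, so only the refreshes between outer iterations contribute additional $\log^* n$ terms.

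I would then perform $q$ outer iterations. At the start of iteration $i$ a partial $(q\Delta+1:i-1)$-coloring is maintained, in which each vertex $v$ has already been assigned a set $X_v\subseteq[q\Delta+1]$ of $i-1$ colors; the goal of iteration $i$ is to give every vertex one more color from its residual list $A_v = [q\Delta+1]\setminus(X_v\cup\bigcup_{u\in N(v)} X_u)$. A direct list-coloring call would suffice if $|A_v|\ge\deg_G(v)+1$ everywhere; a short count gives $|A_v|\ge q\Delta+1-(i-1)(\deg_G(v)+1)$, which is comfortable for low-degree vertices but becomes tight at vertices of degree close to $\Delta$ when $i$ is close to $q$. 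To handle this I would decompose each outer iteration into $O(q\log\Delta)$ inner rounds, each of which picks a sub-palette $S\subseteq[q\Delta+1]$ of size roughly $\Delta+1$ and runs $(\mathrm{degree}+1)$-list coloring on the subgraph induced by the not-yet-extended vertices with lists $A_v\cap S$. The sub-palettes follow a cyclic schedule covering $[q\Delta+1]$ so that every vertex becomes eligible in at least one of the inner rounds.

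The round count then matches the claim: the $q$ outer iterations incur $O(q\log\Delta)$ list-coloring calls of cost $T$ each, producing the $O(q^2\log\Delta\cdot T)$ term, while the Linial preprocessing and its refreshes between the $q$ iterations contribute $O(q\log^* n)$. The main obstacle is the analysis of the inner schedule: one must show that for each vertex $v$ in each outer iteration, at least one of the $O(q\log\Delta)$ sub-palettes $S$ satisfies $|A_v\cap S|\ge\deg_{G'}(v)+1$, where $G'$ is the current subgraph of vertices still awaiting their $i$-th color, even when $|A_v|$ sits at its tight global bound. I expect this to follow from an averaging argument over the cyclic shifts of $S$, paired with a potential argument that tracks the residual slack $|A_v|-\deg_G(v)$ of each vertex and shows it shrinks by a constant factor each inner round, so that after $O(q\log\Delta)$ rounds every vertex has received its new color. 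Any structural assumption implicit here (such as $K_{\Delta+1}$-freeness, needed only so that the target coloring exists at all) will enter the analysis exactly at the maximum-degree vertices, where the slack is tightest.
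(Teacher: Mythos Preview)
Your approach is fundamentally different from the paper's, and the sketch has a real gap at the point you yourself flag as ``the main obstacle.''

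The paper does \emph{not} build the $(q\Delta+1:q)$-coloring one color at a time. Instead it computes a ruling-set-based clustering with $\alpha=4q+4$, so that each cluster has diameter $O(q^2\log\Delta)$ and contains an induced path of $2q+1$ vertices whose neighbors all lie inside the cluster. It then colors everything \emph{except} these paths with a single $\Delta$-coloring, layer by layer toward the path (each layer is a genuine $(\mathrm{degree}+1)$-list-coloring instance, because every uncolored vertex has a colored neighbor deeper in the cluster). That single $\Delta$-coloring is inflated to a $(q\Delta:q)$-coloring in the obvious way. Finally, the paths are filled in using a list-multicoloring lemma for paths of length $2q+1$ (Aubry--Godin--Togni): endpoints have residual lists of size $\ge q+1$, interior vertices of size $\ge 2q+1$, and that is exactly enough to give every path vertex $q$ colors from $\{1,\dots,q\Delta+1\}$. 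The $O(q^2\log\Delta\cdot T)$ term comes from the cluster diameter times $T$, and the $O(q\log^* n)$ from computing the ruling set on $G^{\alpha-1}$.

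Your iterative scheme, by contrast, needs that after $i-1$ rounds the resulting $(q\Delta+1:i-1)$-coloring can be extended by a single further proper coloring from the residual lists $A_v$. Take a $\Delta$-regular graph of girth $\gg q$ (so no $K_{\Delta+1}$ is in sight). At $i=q$ every vertex has $|A_v|\ge \Delta-q+2$ but degree $\Delta$, so the slack $|A_v|-\deg_G(v)$ is $2-q<0$; your potential ``shrinks by a constant factor'' statement is already incoherent here. Restricting to a sub-palette $S$ of size $\Delta+1$ gives, by averaging over the $\approx q$ shifts, only $|A_v\cap S|\gtrsim(\Delta-q)/q$, which is far below $\deg_{G'}(v)+1=\Delta+1$ before any progress has been made. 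Since every vertex is in the same situation, no vertex becomes eligible in any inner round, so $G'=G$ forever and the schedule stalls. More basically, nothing in your outline guarantees that the greedily built $(q\Delta+1:i-1)$-coloring is extendable at all; the paper sidesteps this entirely by \emph{not} iterating color-by-color but by coloring once and pushing all the deficit onto a short path, which is precisely what the Aubry--Godin--Togni lemma is designed to absorb. The $K_{\Delta+1}$-freeness assumption enters in the paper only to ensure the path exists inside each cluster, not as a local slack condition at high-degree vertices.
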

If we set $q = \Theta(1/\varepsilon)$, since $T = O(\sqrt{\Delta} \poly\log \Delta)$, we obtain the following corollary, that shows that at the cost of a slightly worse running time, we obtain a better support. 
\begin{restatable}{corollary}{corsmallsupport}
	For any $\epsilon>0$, the fractional $(\Delta+\varepsilon)$-coloring problem, with support $O(\Delta/\varepsilon)$, can be solved deterministically in time
	\[
	O\left(\frac{1}{\eps^2}\cdot \sqrt{\Delta}\cdot\poly\log\Delta + \frac{1}{\eps}\cdot\log^* n\right).
	\]
\end{restatable}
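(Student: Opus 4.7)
The plan is to construct the coloring iteratively in $q$ phases. Phase $i$ (for $i=1,\ldots,q$) transforms a partial $((i-1)\Delta+1:i-1)$-coloring over palette $\{1,\ldots,(i-1)\Delta+1\}$ into an $(i\Delta+1:i)$-coloring over palette $\{1,\ldots,i\Delta+1\}$, with the convention that phase $1$ starts from the empty assignment and produces a standard $(\Delta+1)$-coloring. Phase $1$ is realized by Linial's $O(\Delta^2)$-coloring in $O(\log^* n)$ rounds followed by one $(\mathrm{degree}+1)$-list coloring call, taking $T$ rounds in total. Each subsequent phase augments every node's color set by exactly one, assigning a new color $c_v\in \{1,\ldots,i\Delta+1\}\setminus X_v$ such that $\{c_v\}\cap (X_u\cup\{c_u\})=\emptyset$ for every neighbor $u$. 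As is standard for results of this form (compare the analogous statement of Bousquet, Esperet, and Pirot), I implicitly assume that $G$ contains no $K_{\Delta+1}$, since otherwise a $(q\Delta+1:q)$-coloring cannot exist for $q\geq 2$.

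For phase $i\geq 2$, a direct degree-plus-one counting gives each node a list of $i(\Delta-1)+2$ allowed colors in $\{1,\ldots,i\Delta+1\}$, but after subtracting neighbors' existing colors together with their new choices, the remaining slack can vanish at nodes sitting inside near-$K_{\Delta+1}$ neighborhoods. To restore slack, I iterate over the $i-1\leq q-1$ existing colors at each node and attempt to swap each of them with a fresh one via a $(\mathrm{degree}+1)$-list coloring instance on an auxiliary graph. This produces $O(q)$ sub-rounds per phase, each costing $O(\log\Delta\cdot T)$ rounds: one Linial call refreshes the auxiliary $O(\Delta^2)$-coloring, followed by a $\log\Delta$-depth reduction chain of degree-plus-one list colorings that mirrors the reduction skeleton used for \Cref{thm:fastfractionalcoloring}. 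Together with an $O(\log^* n)$ initial Linial call at the start of each phase, summing over the $q$ phases yields the claimed $O(q^2\log\Delta\cdot T + q\log^* n)$ total running time.

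The main obstacle will be showing that the color-swap extension always reduces to a valid $(\mathrm{degree}+1)$-list coloring instance, even at near-$K_{\Delta+1}$ nodes. I would borrow a generic extension lemma from \Cref{sec:genericResults} for this: combined with the no-$K_{\Delta+1}$ hypothesis, it guarantees that the swap succeeds at every node in each sub-round, so that $O(q)$ sub-rounds per phase suffice. The key difference from \Cref{thm:fastfractionalcoloring} is that the per-node color count grows additively by one per phase here, rather than doubling as in the proof of \Cref{thm:fastfractionalcoloring}; this is precisely what forces the iteration count to $q$ instead of $O(\log_\Delta q)$, and hence produces a $q^2$ factor in the running time rather than a $\log_\Delta^2 q$ one, while paying off with the much tighter support $q\Delta+1$ instead of $q\Delta+\Delta$.
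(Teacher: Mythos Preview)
The corollary itself is a one-line substitution into \Cref{thm:smallsupport}: set $q=\Theta(1/\varepsilon)$ so that $q\Delta+1=O(\Delta/\varepsilon)$ and $(q\Delta+1)/q\le\Delta+\varepsilon$, and plug in $T=O(\sqrt{\Delta}\,\poly\log\Delta)$. What you have written is really an attempt to prove \Cref{thm:smallsupport} itself, so let me evaluate it on that basis.

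Your approach is entirely different from the paper's and, as written, has a genuine gap. The paper does \emph{not} build the $(q\Delta+1:q)$-coloring by $q$ successive single-color extensions. Instead it computes an $(\alpha,(\alpha-1)\alpha\log\Delta)$-ruling set with $\alpha=4q+4$, clusters around it, and in every cluster isolates an induced path of $2q+1$ nodes whose neighborhood lies inside the cluster. It then produces one partial $\Delta$-coloring that leaves only the path nodes uncolored (layer-by-layer $(\mathrm{degree}+1)$-list coloring towards the path, exactly as in \Cref{lem:onecolor}), blows this up to a partial $(q\Delta:q)$-coloring, and finally colors the $2q+1$ path nodes using the list-coloring result of Aubry, Godin and Togni (\Cref{lem:pathcoloring}), which is the real reason the extra ``$+1$'' in the palette suffices. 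The $q^2$ in the running time comes from the cluster diameter $O(q^2\log\Delta)$, not from $q$ sequential phases.

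Your proposal, by contrast, hinges on extending an $((i-1)\Delta+1:i-1)$-coloring to an $(i\Delta+1:i)$-coloring. As you note, a degree-$\Delta$ node has only $\Delta-i+2$ free colors, so from $i=2$ onward this is \emph{not} a $(\mathrm{degree}+1)$-list coloring instance anywhere in the graph, not merely near $K_{\Delta+1}$. You defer the repair to ``color swaps'' and to ``a generic extension lemma from \Cref{sec:genericResults}'', but no such lemma exists there: \Cref{sec:genericResults} contains only the partial-to-full (\Cref{thm:generic}), randomness-reduction (\Cref{lem:lessbits}), and derandomization (\Cref{thm:genericB}) lemmas, none of which say anything about extending a multicoloring by one color. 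The swap procedure itself is never specified, and it is not clear what invariant a swap would restore or why $O(q)$ swaps per phase would suffice. Likewise, the ``$\log\Delta$-depth reduction chain \dots\ that mirrors \Cref{thm:fastfractionalcoloring}'' mischaracterizes that proof: the $\log\Delta$ factor there comes from the cluster diameter in a ruling-set clustering, not from a recoloring chain, and there is nothing in that argument you can reuse for a phase-by-phase extension. As it stands, the central step of every phase $i\ge 2$ is unproven.
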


We then prove that, at the cost of drastically increasing the number of colors, it is possible to improve the dependency on $\Delta$, and to entirely remove the dependency on $n$.
\begin{restatable}{theorem}{veryfast}\label{thm:veryfast}
	For any $\epsilon> 0$, the fractional $(\Delta+\varepsilon)$-coloring problem can be solved deterministically in time
	\[
          O\left(\left(\log\Delta + \frac{\log^2(1/\varepsilon)}{\log\Delta}\right)\cdot \log (\Delta/\varepsilon) \right) =
          O\left(\log^2\Delta + \log^2 \frac{1}{\varepsilon} + \frac{\log^3 (1/\varepsilon)}{\log \Delta}\right).
	\]
\end{restatable}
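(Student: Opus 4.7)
The plan is to rerun the refinement scheme behind Theorem~\ref{thm:fastfractionalcoloring} with $q = \Theta(\Delta/\varepsilon)$, but to swap out the bottleneck list-coloring subroutine for one that runs in $O(\log(\Delta/\varepsilon))$ rounds and, crucially, does not need any globally-unique $O(\Delta^2)$-colored input. With this choice of $q$, a $(q\Delta : q-1)$-coloring is a fractional $(\Delta+\Theta(\varepsilon))$-coloring, and $\alpha = O\bigl(1+\log(1/\varepsilon)/\log\Delta\bigr)$, so $\alpha^2\log\Delta = O\bigl(\log\Delta + \log^2(1/\varepsilon)/\log\Delta\bigr)$. Plugging $T = O(\log(\Delta/\varepsilon))$ into $O(\alpha^2\log\Delta\cdot T)$ and zeroing out the additive $O(\alpha\log^*n)$ term yields exactly the stated bound, with the cross-term $\log\Delta\log(1/\varepsilon)$ absorbed via $2\log\Delta\log(1/\varepsilon)\le\log^2\Delta+\log^2(1/\varepsilon)$ to produce the equivalent right-hand form.

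To obtain the new subroutine, I would exploit the freedom to inflate the final palette. The refinement step of Theorem~\ref{thm:fastfractionalcoloring} invokes its list-coloring routine as a black box, and if we allow its output to use a palette of size $\mathrm{poly}(\Delta/\varepsilon)$ rather than the tight $\mathrm{degree}+1$, the routine can drive the residual uncolored degree down by a constant factor in each round by a purely local partition of each node's list, terminating in $O(\log(\Delta/\varepsilon))$ rounds without any initial symmetry breaking. The $O(\Delta^2)$-coloring that the tight subroutine of the earlier corollary consumes -- which is the sole reason for the $O(\log^*n)$ factor -- is thereby replaced by an essentially local list inflation, eliminating every dependence on $n$ at the price of only multiplying the support (but not the ratio) of the final fractional coloring.

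The main technical obstacle lies in this middle step: Linial's $\Omega(\log^*n)$ lower bound forbids an honest $O(\Delta^2)$-coloring in $n$-independent time, so the delicate point is to verify that the refinement machinery of Theorem~\ref{thm:fastfractionalcoloring} remains correct when fed a list-coloring oracle with these weaker, larger-palette guarantees, and that the repeated palette inflation across $\alpha\log\Delta$ levels only multiplies the support of the final output while leaving the ratio $p/q = \Delta + \Theta(\varepsilon)$ untouched. Once that compatibility is established, the runtime calculation above produces the claim, and a rescaling $\varepsilon \mapsto \varepsilon/c$ for a constant $c$ absorbs the $\Theta(\cdot)$ into the actual $\Delta+\varepsilon$ bound.
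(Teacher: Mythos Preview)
Your proposal has a genuine gap at its core. You correctly flag that Linial's $\Omega(\log^* n)$ lower bound blocks any honest deterministic proper coloring in $n$-independent time, but your proposed workaround does not actually escape it. The subroutine you sketch --- ``drive the residual uncolored degree down by a constant factor in each round by a purely local partition of each node's list, terminating in $O(\log(\Delta/\varepsilon))$ rounds'' --- would, if it terminated with all nodes colored, output a proper vertex coloring; Linial's bound applies to \emph{any} proper coloring regardless of palette size, so no such deterministic routine can exist. Inflating the palette buys nothing here: even a $\mathrm{poly}(n)$-coloring of a ring requires $\Omega(\log^* n)$ deterministic rounds. And the refinement scheme of \Cref{lem:onecolor} genuinely needs a proper coloring at each of its $R$ levels (the list at level $i$ is built by deleting the colors of already-colored neighbors at level $>i$), so a defective or partial output from your oracle would not plug in. The ``compatibility'' you defer as the delicate point is therefore not a verification to be carried out but an actual obstruction.

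The paper's route is different in kind and supplies the missing idea: it does \emph{not} attempt a deterministic $n$-free subroutine. Instead it (i) replaces the list-coloring step inside \Cref{lem:onecolor} by the trivial randomized ``try a random available color for $O(\log q)$ rounds'' procedure, which leaves each node uncolored with probability at most $1/q$; it similarly replaces the initial $O(\Delta^{2\alpha})$-coloring needed for the ruling-set computation by a single random color choice (this is \Cref{lem:veryfastintermediate}). Together with a uniformly random choice of the marked node in each large cluster, this yields a \emph{randomized} algorithm that outputs a partial $(\Delta:1)$-coloring in which every node is uncolored with probability at most $2/q$, in $O(\alpha^2\log\Delta\cdot\log q)$ rounds. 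Step (ii) then pushes this through the generic amplification and derandomization lemmas (\Cref{thm:generic} and \Cref{thm:genericB}): one runs the randomized algorithm for all $2^{O(\log(n/f))}$ shared-randomness seeds in parallel, each with a fresh disjoint palette. The palette inflation you were hoping for thus arises not from the list-coloring step but from enumerating random seeds, and this is exactly what sidesteps Linial --- each per-seed output is only a partial coloring, never a proper one, so no lower bound is violated.
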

\begin{corollary}
	For any $\epsilon> 1 / \Delta^c$, where $c > 0$ is an arbitrary constant, the fractional $(\Delta+\varepsilon)$-coloring problem can be solved in $O(\log^2 \Delta)$ deterministic rounds.
\end{corollary}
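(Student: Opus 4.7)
The plan is to simply invoke \Cref{thm:veryfast} and verify that, under the hypothesis $\varepsilon > 1/\Delta^c$, each of the three terms in its complexity bound is $O(\log^2\Delta)$. The hypothesis is equivalent to $\log(1/\varepsilon) < c\log\Delta$, which is the only fact I would use.

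Applying the theorem gives a round complexity of
\[
O\!\left(\log^2\Delta + \log^2 \frac{1}{\varepsilon} + \frac{\log^3 (1/\varepsilon)}{\log \Delta}\right).
\]
The first summand is already $O(\log^2\Delta)$. Substituting $\log(1/\varepsilon) < c\log\Delta$ into the second summand yields $\log^2(1/\varepsilon) < c^2\log^2\Delta = O(\log^2\Delta)$. For the third summand, the same substitution gives $\log^3(1/\varepsilon)/\log\Delta < c^3\log^3\Delta / \log\Delta = c^3\log^2\Delta = O(\log^2\Delta)$. Summing the three bounds establishes the claimed $O(\log^2\Delta)$ deterministic round complexity.

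There is no real obstacle; the corollary is a direct algebraic consequence of \Cref{thm:veryfast}, and its only purpose is to highlight the regime in which the $\log^2\Delta$ term dominates the $\varepsilon$-dependent terms.
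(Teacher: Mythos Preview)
Your proposal is correct and matches the paper's approach: the corollary is stated in the paper without a separate proof, as it is an immediate algebraic consequence of \Cref{thm:veryfast}, and your verification that each of the three terms is $O(\log^2\Delta)$ under the hypothesis $\log(1/\varepsilon) < c\log\Delta$ is exactly the intended justification.
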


Our second contribution, presented in \Cref{sec:approximation}, is in showing that, by allowing a logarithmic dependency on $n$ in the running time, we can obtain fractional colorings that are arbitrarily close to the optimum. We provide both randomized and deterministic algorithms, that differ in the required support and in the running time. Let $p / q = \chi_f(G)$ be the fractional chromatic number of $G$. The algorithms that we provide do not require to know $p$ and $q$, but if these values are provided, or even if just the value of $\chi_f(G)$ is provided, then our algorithms obtain a fractional coloring with smaller support.
In particular, if $p$ and $q$ are known to the nodes, let $p'=p$ and $q' = q$. Otherwise, let $p' = \chi c \log n / \epsilon^2$ and $q' = (1-\varepsilon)p' / \chi_f(G)$, where $\chi = \chi_f(G)$ if $\chi_f(G)$ is known to the nodes, and $\Delta+1$ otherwise. We first show the following.
\begin{restatable}{theorem}{randomizedapx}\label{thm:randomizedapx}
	Let $G=(V,E)$ be a graph that admits a $(p:q)$ coloring, and let $t=O(\log n/\varepsilon)$, for an arbitrary $\epsilon > 0$. There is a randomized \LOCAL algorithm that, with high probability, computes a $(tp':(1-\varepsilon)tq')$-coloring, that is, a fractional $(1+O(\varepsilon))\frac{p}{q}$-coloring, in $O(\log n / \varepsilon)$ rounds.
\end{restatable}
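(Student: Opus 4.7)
The plan is to run $t = \Theta(\log n/\varepsilon)$ phases in sequence, each with independent random bits and a private palette $P_i$ of $p'$ colors, so that the total palette $P_1\cup\cdots\cup P_t$ has $tp'$ colors. In phase $i$, a lightweight randomized subroutine running in $O(1)$ rounds attempts to produce a partial $(p':q')$-coloring of $G$ using palette $P_i$: it should assign each vertex $v$ either a set $S_v^{(i)}\subseteq P_i$ of exactly $q'$ colors with $S_u^{(i)}\cap S_v^{(i)}=\emptyset$ for every edge $uv$, or leave $v$ uncolored in this phase. Concatenating what survives across phases then gives the claimed $(tp':(1-\varepsilon)tq')$-coloring.

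For the per-phase subroutine, my plan is to have every vertex $v$ independently sample a uniformly random $q'$-subset $S_v^{(i)}$ of $P_i$, exchange $S_v^{(i)}$ with neighbors in one round, and then discard from $S_v^{(i)}$ every color that also appears in some $S_u^{(i)}$ for a neighbor $u$. To argue that not too many colors are discarded in expectation, I would use the hypothesis that $G$ admits a $(p:q)$-coloring together with the choice of $p', q'$. In the cases where $p,q$ are unknown to the nodes, the inflated palette size $p' = \Theta(\chi c\log n/\varepsilon^2)$ provides plenty of slack: each fixed color in $S_v^{(i)}$ is picked by any single neighbor with probability $q'/p' = \Theta(1/\chi)$, so a standard second-moment computation would show that the expected fraction of colors at $v$ that survive in a single phase is at least $1-O(\varepsilon)$.

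To turn the per-phase expectation into a high-probability guarantee, I would apply a Chernoff--Hoeffding bound for each vertex separately: over the $t$ independent phases, the number of surviving colors at a fixed $v$ is a sum of bounded independent random variables with mean at least $(1-O(\varepsilon))tq'$, so choosing the constant in $t=\Theta(\log n/\varepsilon)$ large enough pushes the total above $(1-\varepsilon)tq'$ with probability at least $1-n^{-(c+1)}$. A union bound over the $n$ vertices then gives that every vertex ends up with at least $(1-\varepsilon)tq'$ colors with high probability. Running the $t$ phases sequentially, each in $O(1)$ rounds, gives the stated round complexity $O(\log n/\varepsilon)$, and a direct calculation verifies that $tp'/((1-\varepsilon)tq') = (1+O(\varepsilon))p/q$ in all three parameter regimes (whether $p,q$, only $\chi_f(G)$, or nothing is known to the nodes).

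The main obstacle I expect is the per-phase analysis in the tight case where $p,q$ are known, so that $p'=p$ and $q'=q$ and there is no slack from an inflated palette. Here the straightforward sample-and-drop procedure need not succeed, because the expected per-color survival probability can be bounded away from $1$ when $\Delta q/p$ is large. I would therefore likely replace naive uniform sampling by a procedure that exploits the existence of the underlying $(p:q)$-coloring more directly---for instance by letting each vertex pick a uniformly random global permutation of $P_i$ (shared with its neighborhood through a short preprocessing step) and argue that this symmetrizes the potential conflicts, so that the per-color survival probability can be bounded in terms of the hypothesized ratio $p/q=\chi_f(G)$ rather than the worst-case degree. Establishing this symmetry-based survival bound while preserving enough independence across phases to apply the concentration argument is the delicate technical point of the proof.
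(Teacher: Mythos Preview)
Your overall architecture (run $t=\Theta(\log n/\varepsilon)$ independent phases with disjoint palettes and Chernoff over phases) matches the paper's use of \Cref{thm:generic}. The gap is entirely in the per-phase subroutine.

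The sample-and-drop procedure you propose does not achieve a $1-O(\varepsilon)$ survival fraction. With $q'/p'=(1-\varepsilon)/\chi_f(G)$, a fixed color at $v$ is hit by some neighbor with probability up to roughly $\deg(v)/\chi_f(G)$, which need not be small: in a bipartite graph $\chi_f(G)=2$ while $\deg(v)$ can be $\Theta(n)$. So the expected surviving fraction per phase is not $1-O(\varepsilon)$ but can be exponentially small in $\Delta$. Your observation that ``each fixed color is picked by any single neighbor with probability $\Theta(1/\chi)$'' is correct but does not imply the conclusion you draw, and the inflated palette in the unknown-$p,q$ case does not help: $p'$ grows with $\chi$ (which is at most $\Delta+1$), not with $\Delta/\chi_f(G)$, so there is no slack when $\chi_f(G)\ll\Delta$. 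You spotted the obstacle in your last paragraph for the known-$(p,q)$ case, but it is present in all three regimes, and the shared-permutation idea neither helps (permuting colors does not change the collision statistics) nor is implementable in $O(1)$ rounds.

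The paper's per-phase subroutine is completely different and is where the work lies. It runs a modified MPX low-diameter decomposition (\Cref{lem:MPXmod}) to obtain clusters that are pairwise $2$-hops apart, have weak diameter $O(\log n/\varepsilon)$, and leave each node unclustered with probability at most $\varepsilon$. Within each cluster one then \emph{brute-forces} an optimal $(p':q')$-coloring in $O(\log n/\varepsilon)$ rounds (this exists because the whole graph, hence each cluster, admits one; the probabilistic argument in \Cref{lem:clustercoloring} handles the unknown-$(p,q)$ case). Since clusters are $2$-hops apart these local colorings are automatically consistent, giving a partial $(p':q')$-coloring with each node uncolored with probability $\le\varepsilon$. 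The phases are then run in parallel and combined via \Cref{thm:generic}. So the missing idea in your plan is precisely this clustering step: it is what allows one to exploit the \emph{existence} of a global $(p:q)$-coloring locally, something uniform random sampling cannot do.
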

We then show two different deterministic algorithms, that provide different tradeoffs between the support and the running time.
\begin{restatable}{theorem}{detapxA}\label{thm:detapx1}
	Let $G=(V,E)$ be a graph that admits a $(p:q)$-coloring, and let $t=O(\poly n / \varepsilon)$, for an arbitrary $\epsilon > 0$. There is a deterministic \LOCAL algorithm that computes a $(t p':(1-\varepsilon)t q')$-coloring, that is, a fractional $(1+O(\varepsilon))\frac{p}{q}$-coloring, in $O(\log n / \varepsilon)$ rounds.
\end{restatable}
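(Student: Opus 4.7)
The plan is to derandomize the algorithm of \Cref{thm:randomizedapx} by enumerating over a polynomially-sized deterministic family of seeds in place of genuine random bits. First, I would inspect the randomized algorithm and argue that its correctness analysis---which is essentially a Chernoff / union-bound argument---goes through with $O(\log n)$-wise independent randomness, so a seed of length $O(\log n)$ suffices to maintain the per-node guarantee $\mathbb{E}_r|A(r)(v)|\geq(1-\varepsilon)tq'$ for every $v$, where $A(r)$ denotes the output of the randomized algorithm on seed $r$ and $t=O(\log n/\varepsilon)$ is the parameter from the randomized theorem.

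Next I would fix a standard $O(\log n)$-wise independent hash family $\mathcal{F}$ of size $N=\poly n$ and, in parallel, run $N$ independent copies of the randomized algorithm, the $i$-th copy using seed $f_i\in\mathcal{F}$ and its own disjoint palette of $tp'$ colors. Because one copy runs in $O(\log n/\varepsilon)$ rounds and the copies interact only through the graph (not through shared randomness or shared palettes), they run together in the same $O(\log n/\varepsilon)$ rounds; every node simply evaluates each seed locally on its $O(\log n/\varepsilon)$-hop neighborhood. The output is the union of all $N$ multicolorings; each copy is proper within its own palette, so the union is a proper multicoloring on a palette of size $Ntp'$.

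To bound the number of colors each node accumulates, I would sum $|A(f)(v)|$ over the \emph{entire} seed family $\mathcal{F}$, so that the sum equals exactly $N\cdot\mathbb{E}_{f\sim\mathcal{F}}|A(f)(v)|$. Because $\mathcal{F}$ preserves the expectation guarantee of the fully random analysis, this sum is at least $(1-\varepsilon)Ntq'$. Setting $t':=Nt=O(\poly n/\varepsilon)$ then yields the required $(t'p' : (1-\varepsilon)t'q')$-coloring in $O(\log n/\varepsilon)$ deterministic rounds, matching the statement of the theorem.

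The main obstacle is the first step: justifying that the analysis of \Cref{thm:randomizedapx} really depends only on limited independence and on a per-node expectation bound, rather than on some delicate high-probability event that would require fully independent random bits. I expect this to go through because the ``success'' event for a fixed node depends only on the joint randomness in a constant-radius neighborhood around it (the events that appear in the union bound are local), and such local, low-order events are precisely what $O(\log n)$-wise independent families are designed to preserve while keeping the seed length---and hence $N$---polynomial in $n$.
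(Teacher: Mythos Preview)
Your overall enumeration strategy---run polynomially many seeded copies in parallel on disjoint palettes and sum the colors each node receives---is exactly the paper's \Cref{thm:genericB}. The gap is in your first step. The randomized algorithm of \Cref{thm:randomizedapx} is built on the MPX clustering of \Cref{lem:MPXmod}: every node draws an exponential shift $\delta_u$, and the event ``$v$ ends up unclustered'' depends on the shifts of \emph{all} nodes within distance $O(\log n/\varepsilon)$ of $v$ (the algorithm's round complexity is $O(\log n/\varepsilon)$, so nothing about $v$'s output is determined by a constant-radius neighborhood). That ball may contain up to $\Delta^{O(\log n/\varepsilon)}$ nodes, so the relevant event is neither constant-radius nor low-order in the random bits, and the boundary-probability bound relies on the memoryless property of the exponential distribution, which is sensitive to the joint law of all those shifts. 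There is no reason to expect $O(\log n)$-wise independence to preserve it, and you have not supplied an argument that it does.

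The paper sidesteps this entirely. Rather than arguing that limited independence suffices for the specific MPX analysis, it applies a Newman-style reduction (\Cref{lem:lessbits}): any $T$-round randomized \LOCAL algorithm with failure probability $f$ and no private inputs can be replaced by one using only $O(\log(n/f))$ bits of \emph{shared} randomness, via a probabilistic-method count over all $n$-node labeled inputs. Enumerating the resulting $2^{O(\log(n/f))}=\poly n$ seeds then yields the deterministic algorithm directly (\Cref{thm:genericB}), without opening up the MPX analysis at all. If you want to salvage your route you would have to re-prove the MPX unclustering bound under limited independence, which is genuinely nontrivial; otherwise you are effectively forced back onto the Newman argument, at which point the $k$-wise independent family is doing no work.
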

\begin{restatable}{theorem}{detapxB}\label{thm:detapx2}
	Let $G=(V,E)$ be a graph that admits a $(p:q)$ coloring, and let $t=O(\log n/\varepsilon)$, for an arbitrary $\epsilon > 0$. There is a deterministic \LOCAL algorithm that computes a $(tp':(1-\varepsilon)tq')$-coloring, that is, a fractional $(1+O(\varepsilon))\frac{p}{q}$-coloring, in $O(\log n (\log^2 n + \mathrm{ND})/ \varepsilon)$ rounds, where $\mathrm{ND}\le \poly \log n$ is the time required to compute an $(O(\log n), O(\log n))$-network decomposition.
\end{restatable}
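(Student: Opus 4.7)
The plan is to derandomize the algorithm underlying \Cref{thm:randomizedapx} by means of a network decomposition. That algorithm proceeds in $t = O(\log n / \varepsilon)$ phases that each use only $O(1)$ rounds of communication and a single batch of local random choices per node (so the $i$-th phase uses a disjoint fresh copy of the palette $[p']$ and yields a set of colors at every node $v$). The analysis guarantees that, for each node $v$ and each phase, the probability that $v$ collects fewer than roughly $(1-\varepsilon)q'$ colors is small enough that summing over $t$ phases and union-bounding over the $n$ nodes delivers the claimed $(tp':(1-\varepsilon)tq')$-coloring with high probability.

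To remove the randomness while preserving the round complexity up to the stated factor, I first compute an $(O(\log n), O(\log n))$-network decomposition $\mathcal{N}$ in $\mathrm{ND}$ rounds. Then the $t$ phases are executed sequentially, but each random batch is replaced by a deterministic choice obtained via the method of conditional expectations orchestrated over $\mathcal{N}$. Within one phase I process the $O(\log n)$ color classes of $\mathcal{N}$ one after the other; all clusters of the current color class act in parallel, each cluster leader gathering the cluster's topology and the choices already committed by earlier color classes in $O(\log n)$ rounds (thanks to the $O(\log n)$ strong diameter) and then fixing the random bits of its cluster so as to minimize a pessimistic estimator of the number of bad events. Taking the estimator to be of exponential-moment type, $\Phi = \sum_v \E\!\left[e^{\lambda X_v} \mid \text{committed bits}\right]$, ensures that $\Phi$ decomposes into cluster-local sums, that the required conditional expectations can be computed inside each cluster by brute force in the \LOCAL model, and that $\Phi < 1$ is preserved after all color classes have been handled, which rules out every bad event of the current phase.

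Accounting, each phase costs $O(\log n \cdot \log n) = O(\log^2 n)$ rounds, and the one-time preprocessing costs $\mathrm{ND}$ rounds; with $t = O(\log n / \varepsilon)$ phases, the total is $O(\mathrm{ND} + \log^3 n / \varepsilon) = O(\log n \, (\log^2 n + \mathrm{ND})/\varepsilon)$, matching the claim. The main technical obstacle is to design the pessimistic estimator so that (i) it is a sum of terms local to the random choices made in the current phase, so that each cluster can evaluate its conditional expectations using only the frozen bits of earlier color classes, (ii) it upper-bounds the tail probabilities used in the randomized analysis with enough slack to survive the union bound over all $n$ nodes, and (iii) it does not blow up when we absorb the conditioning induced by clusters of the same color class that decide in parallel; once such an estimator is in place, the standard network-decomposition derandomization framework (cf.\ \cite{rozhonghaffari20,GhaffariKuhn21}) yields the deterministic algorithm.
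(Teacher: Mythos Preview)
Your plan contains a concrete gap: the assertion that each of the $t$ phases ``use[s] only $O(1)$ rounds of communication and a single batch of local random choices per node'' is false. Each of the $t$ parallel runs that make up the algorithm of \Cref{thm:randomizedapx} executes the MPX-based clustering of \Cref{lem:MPXmod} followed by brute-force coloring inside clusters; this takes $\Theta(\log n/\varepsilon)$ rounds, and whether a node $v$ ends up uncolored in a given run depends on the exponential shifts $\delta_u$ of nodes at distance up to $\Theta(\log n/\varepsilon)$ from $v$. Hence a cluster of an $(O(\log n),O(\log n))$-network decomposition of $G$ cannot evaluate your estimator $\Phi$ from its $O(\log n)$-hop neighborhood, your per-phase cost of $O(\log^2 n)$ does not hold, and clusters of the same color class are not independent for the purpose of fixing bits in parallel. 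To repair this you would have to build the decomposition on $G^{\Theta(\log n/\varepsilon)}$, which reinstates the multiplicative $\log n/\varepsilon$ factor on both the $\mathrm{ND}$ term and the per-color-class cost---and at that point you are essentially reconstructing the machinery you could have cited.

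The paper's proof is different and shorter: it applies the black-box derandomization of Ghaffari, Harris, and Kuhn (\Cref{thm:slocalderandomization}), which converts any $r$-round randomized \LOCAL algorithm for a locally checkable problem into a deterministic one running in $O(r(\log^2 n+\mathrm{ND}))$ rounds; plugging in $r=O(\log n/\varepsilon)$ for the algorithm of \Cref{thm:randomizedapx} gives the stated bound directly. The only technical work is establishing local checkability when $q'$ is not given: the paper adds a preprocessing step where each node gathers its $2T$-radius neighborhood and computes the largest $q'$ for which a $(p't:(1-\varepsilon)q't)$-coloring exists there, so that it can correctly set its failure flag. Your sketch does not address this point (though a direct conditional-expectation argument, done correctly, might sidestep it).
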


We then prove, in \Cref{sec:lowerbound}, that the $O(\log n / \varepsilon)$ time dependency for computing an almost optimal fractional coloring is necessary, even on trees, and even for randomized algorithms.
\begin{restatable}{theorem}{lowerbound}\label{thm:lowerbound}
	Computing a fractional $(2+\epsilon)$-coloring on trees in the \LOCAL model requires $\Omega(\log n / \varepsilon)$, even for randomized algorithms.
\end{restatable}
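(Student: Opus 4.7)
The plan combines a structural constraint on $(2+\varepsilon)$-fractional colorings with an indistinguishability argument on a long path, amplified across many disjoint ``hard'' substructures to obtain the $\log n$ factor.

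First I would prove a structural lemma. If $\{X_v\}_v$ is any $(p:q)$-coloring with $p \leq (2+\varepsilon) q$ and $u,w$ are at distance exactly $2$ with common neighbor $v$, then $X_u, X_w \subseteq [p]\setminus X_v$, a set of size at most $(1+\varepsilon)q$, so by inclusion--exclusion $|X_u \cap X_w| \geq (1-\varepsilon)q$, equivalently $|X_u \triangle X_w| \leq 2\varepsilon q$ after trimming $|X_u|=|X_w|=q$. Chaining this bound along an induced path by the triangle inequality for the symmetric-difference metric yields $|X_u \triangle X_v| \leq d\,\varepsilon q$ for nodes at even distance $d$ and dually $|X_u \cap X_v| \leq \tfrac{d-1}{2}\varepsilon q$ at odd distance $d$, so the color sets along any path drift by only $\varepsilon q$ per two hops in any valid coloring.

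Next I take the hard tree to be the path $P_n = v_0 v_1\cdots v_{n-1}$ and fix a randomized $T$-round algorithm $\mathcal{A}$ with success probability at least $1 - n^{-c}$. By Yao's minimax principle, I pass to a deterministic algorithm on a uniform random assignment of IDs from $\{1,\dots,n^c\}$. For any even $L > 2T$, the $T$-hop views of $v_i$ and $v_{i+L}$ are vertex-disjoint, so the outputs $X_{v_i}, X_{v_{i+L}}$ are independent random $q$-subsets of $[p]$ conditional on the IDs and random bits outside these views. A direct computation gives
\[
\mathbb{E}\big[|X_{v_i}\triangle X_{v_{i+L}}|\big] \;\geq\; \frac{2q(1+\varepsilon)}{2+\varepsilon} \;=\; \Omega(q),
\]
while the structural lemma forces $|X_{v_i}\triangle X_{v_{i+L}}| \leq L\varepsilon q$ whenever $\mathcal{A}$ succeeds. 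A single pair already yields the weak bound $T = \Omega(1/\varepsilon)$.

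To amplify to $\Omega(\log n/\varepsilon)$, I would partition the path into $\Theta(n/L)$ disjoint segments of length $L$ and apply the structural constraint on a same-parity ``hard pair'' per segment. A Chernoff-type tail bound for the hypergeometric distribution of $|X_u\cap X_v|$ under the near-independence above shows that the probability of the structural constraint being satisfied on a single such pair is exponentially small in a carefully chosen ``effective'' parameter, so that combining across the $\Theta(n/L)$ disjoint segments forces the overall success probability to drop below $1 - n^{-c}$ unless $L = \Omega(\log n / \varepsilon)$, whence $T = \Omega(\log n / \varepsilon)$. The main obstacle is precisely this amplification step: the outputs at different segments are only conditionally independent because they share the global ID randomness, and decoupling them rigorously requires a Ramsey-style fooling argument on IDs (in the spirit of Linial's $\log^{\ast} n$ bound), threaded into the proof so that the final tail bound survives with only a constant-factor loss in the exponent.
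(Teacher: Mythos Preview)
Your structural drift lemma is correct and pleasant, but the overall approach has a fundamental gap: you take the hard instance to be the path $P_n$, and paths in fact admit fast fractional $(2+\varepsilon)$-colorings. For instance, \Cref{thm:grid-det} applied with $d=1$ yields a deterministic fractional $(2+\varepsilon)$-coloring of a path in $2^{O(\log(1/\varepsilon))}=\poly(1/\varepsilon)$ rounds, with no dependence on $n$ whatsoever. Hence no $\Omega(\log n/\varepsilon)$ lower bound can be established with $P_n$ as the witness, and whatever argument you run on the path must break. Concretely, the step that fails is the ``direct computation'' $\mathbb{E}\big[|X_{v_i}\triangle X_{v_{i+L}}|\big] \geq 2q(1+\varepsilon)/(2+\varepsilon)$: independence of the two outputs gives no such lower bound, since the algorithm is free to make both outputs concentrate on the same deterministic $q$-set (your formula is the value for \emph{uniformly} random $q$-subsets, which the outputs need not be). If you switch to odd $L$ and lower-bound the expected intersection via Cauchy--Schwarz and the i.i.d.\ property of the two views, you do recover $T=\Omega(1/\varepsilon)$; but the amplification to $\Omega(\log n/\varepsilon)$ across many segments, which you yourself flag as the main obstacle, cannot go through on a path no matter how carefully the ID dependencies are handled---the existence of the $n$-independent algorithm above rules it out.

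The paper's proof is entirely different and never analyzes the behavior of an algorithm on a path. It constructs a family of graphs $H$ of girth $\Omega(\log n/\varepsilon)$ and fractional chromatic number strictly greater than $2+\varepsilon$: start from a graph $G$ with $O(n)$ edges, girth $\Omega(\log n)$, and independence number at most $n/c$ (such graphs exist by~\cite{LubotzkyPS88}), and subdivide every edge into a path of length $2k+1$ with $k=\Theta(1/\varepsilon)$. A short counting argument shows that every independent set of $H$ has size below $\tfrac{N}{2}(1-\Theta(\varepsilon))$, so $\chi_f(H)>2+\varepsilon$. The lower bound then follows by the standard high-girth indistinguishability argument: an $o(\log n/\varepsilon)$-round algorithm for trees sees no cycle when run on $H$, so it would output a valid fractional $(2+\varepsilon)$-coloring of $H$, contradicting $\chi_f(H)>2+\varepsilon$. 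The essential missing ingredient in your proposal is precisely this construction of locally tree-like graphs with $\chi_f>2+\varepsilon$; any such graph must contain vertices of degree at least $3$, which is why the path cannot serve as the hard instance.
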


Finally, in \Cref{sec:grids}, we move our attention to grids. In \cite{BousquetEP21}, it has been shown that, for any constant $\epsilon$ and $d$, in $d$-dimensional grids, it is possible to compute a fractional $(2+\varepsilon)$-coloring in time $O(\log^* n)$.  We show that the same problem can be solved in constant time.
\begin{restatable}{theorem}{griddet}\label{thm:grid-det}
	Let $G$ be a $d$-dimensional grid. For any $\varepsilon > 0$, there is a deterministic \LOCAL algorithm that computes a fractional $(2 + \varepsilon)$-coloring on $G$, that runs in $2^{O(d^2 + d \log \frac{1}{\varepsilon})}$ rounds.
\end{restatable}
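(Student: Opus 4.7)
The plan exploits the product structure of the $d$-dimensional grid by constructing a graph homomorphism $\phi\colon V(G)\to\mathbb{Z}_{2k+1}$, where $k=\lceil 2/\varepsilon\rceil$, and then pulling back the canonical $(2k+1:k)$-coloring of the odd cycle $C_{2k+1}$, in which vertex $i$ is assigned the arc $\{ik,ik+1,\dots,ik+k-1\}\bmod(2k+1)$. The resulting fractional coloring has ratio $(2k+1)/k\le 2+\varepsilon$, so it suffices to build $\phi$ satisfying $\phi(u)-\phi(v)\equiv\pm 1\pmod{2k+1}$ on every edge $\{u,v\}$ of $G$. If global coordinates were available, the map $(x_1,\dots,x_d)\mapsto\sum_i x_i\bmod(2k+1)$ would trivially work, so the task reduces to simulating this map locally.

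The construction would proceed in three stages. First, each node identifies the $d$ axes and orients each axis in $O(d)$ rounds from its $2$-neighborhood: same-axis neighbors of $v$ are distinguished from different-axis neighbors by the local intersection pattern in a $d$-dimensional grid, and a canonical direction per axis can be chosen consistently across nearby nodes via an ID-based tie-break. Second, the grid is implicitly tiled into cubes of side $L=\Theta(k)$; within each cube a canonical anchor node is selected (e.g.\ the smallest-ID node at a designated corner of its radius-$L$ ball), and every node computes its coordinates relative to its anchor in $O(L)=O(1/\varepsilon)$ rounds and sets its tentative phase to the sum of those coordinates modulo $2k+1$. Third, the phases are corrected by a tile-level offset so that the homomorphism condition holds across tile boundaries; the offsets themselves are obtained by solving the same problem on the coarser tile-adjacency graph, which is again a $d$-dimensional grid and permits the same reduction, so the recursion bottoms out in a constant number of layers.

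The main obstacle is showing that phase reconciliation across tile boundaries can be carried out in constant rounds. One must argue that, after anchor selection and tentative phase assignment, the disagreements between neighboring tiles depend only on the relative positions of their anchors (and not on any global information), so they can be resolved by a purely local deterministic scheme that uses IDs only to break ties. The stated running time $2^{O(d^2+d\log(1/\varepsilon))}$ is then accounted for by the tile volume $O(k^d)=(1/\varepsilon)^{O(d)}$ each node must inspect, multiplied by a $2^{O(d^2)}$ overhead from enumerating axis orientations and handling the $2^d$ tile-boundary configurations across the $O(d)$ recursion levels.
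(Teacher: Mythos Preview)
Your approach is blocked by a lower bound that no implementation can evade. A homomorphism $\phi\colon G\to C_{2k+1}$ is in particular a proper $(2k+1)$-coloring of $G$; for fixed $\varepsilon>0$ (hence fixed $k$) and $d=1$, Linial's $\Omega(\log^*n)$ lower bound for $O(1)$-coloring a path already rules out computing such a $\phi$ in a number of rounds independent of $n$. The same obstruction applies to your target output, not just to the specific homomorphism you use to reach it: a $(2k+1:k)$-coloring assigns to each node one of at most $\binom{2k+1}{k}$ possible color sets, with neighbors receiving distinct sets, so it is itself a proper coloring with a constant palette and hence $\Omega(\log^*n)$-hard on paths.

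The place where your outline concretely breaks is the phase-reconciliation recursion. Passing to the tile-adjacency graph shrinks each side length of the grid only by a factor $L=\Theta(1/\varepsilon)$, so reaching constant size takes $\Theta\!\big(\tfrac{\log n}{d\log L}\big)$ layers, not $O(1)$; the claim that ``the recursion bottoms out in a constant number of layers'' is false, and by the previous paragraph it cannot be repaired. The paper's proof sidesteps the $\log^*n$ barrier by an entirely different mechanism: it takes the $O(\log^*n)$-round algorithm of \cite{BousquetEP21}, replaces its symmetry-breaking subroutine by a random coloring so as to obtain a randomized \emph{partial} fractional coloring in which each node is uncolored with probability at most $\varepsilon$, and then applies the generic amplification and derandomization of \Cref{thm:generic} and \Cref{thm:genericB}. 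The price is a support that grows with $n$, and by the argument above this is unavoidable---any constant-support scheme like yours must incur a $\log^*n$ term.
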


\section{Generic results}
\label{sec:genericResults}
In this section, we prove some generic statements that will be useful in the following sections.

\subsection{From partial colorings to fractional colorings}
 We start by showing that, given an algorithm that computes a partial fractional coloring satisfying that each node has some fixed probability of being colored, then it is possible, at the cost of increasing the total number of colors, to compute a proper fractional coloring.
\begin{lemma}\label{thm:generic}
	Assume there exists a randomized algorithm $A(n,\varepsilon)$ that runs in $T(n,\varepsilon)$ rounds and, with probability at least $1 - f$, such that $f = f(n,\varepsilon) \le \varepsilon$, computes a partial $(p : q)$-coloring satisfying that each node is uncolored with probability at most $\epsilon$. Then, there exists a randomized algorithm that runs in $T(n,\varepsilon/4)$-rounds and, for an arbitrary $f'$, with probability at least $1 - f'$, computes a $(p' : q')$-coloring, where $p' = p t$, $q' = (1 - \varepsilon) q t$, and $t = O(\frac{1}{\varepsilon}\log\frac{n}{f'})$. 
\end{lemma}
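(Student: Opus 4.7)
The plan is a standard boosting by independent repetition with disjoint color palettes. Run the algorithm $A$ independently $t$ times with parameter $\epsilon/4$, giving each invocation its own private palette of $p$ fresh colors. The total palette has $p' = pt$ colors, and since palettes of different runs are disjoint, the proper-coloring condition between adjacent nodes is preserved across runs: it suffices that each individual run produces a proper partial $(p:q)$-coloring. The round complexity is $t \cdot T(n, \epsilon/4)$ if we run the invocations sequentially, but since the invocations are independent and use disjoint palettes, they can be executed in parallel using independent randomness, so the running time is only $T(n, \epsilon/4)$ as claimed.

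For the color count at each node, let $X_{v,i}$ be the indicator that node $v$ is assigned its $q$ colors in run $i$. The event $\{X_{v,i} = 0\}$ is contained in the union of (i) the event that run $i$ fails as a whole (probability at most $\epsilon/4$) and (ii) the event that run $i$ succeeds but $v$ happens to be uncolored (probability at most $\epsilon/4$). Thus $\Pr[X_{v,i}=0] \le \epsilon/2$, and $\E\!\left[\sum_{i=1}^{t} X_{v,i}\right] \ge (1-\epsilon/2)t$. Because the $t$ runs use independent randomness, the $X_{v,i}$ are mutually independent, so a standard Chernoff bound yields
\[
\Pr\!\left[\sum_{i=1}^{t} X_{v,i} < (1-\epsilon)\,t\right] \le \exp(-\Omega(\epsilon t)).
\]
Choosing $t = \Theta\!\left(\tfrac{1}{\epsilon}\log\tfrac{n}{f'}\right)$ with a sufficiently large constant makes this probability at most $f'/n$; a union bound over the $n$ nodes gives overall failure probability at most $f'$. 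Whenever the good event holds, node $v$ receives at least $q$ colors in each of at least $(1-\epsilon)t$ runs, for a total of at least $(1-\epsilon)qt = q'$ colors, all drawn from the $pt$-color palette.

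The main (minor) subtlety to get right is bundling the two failure modes of $A$ into a single per-node uncoloring probability: the algorithm can fail globally with probability $f \le \epsilon/4$, and even conditioned on success a fixed node $v$ can be uncolored with probability $\epsilon/4$. The clean way to handle this is to define $X_{v,i}$ pessimistically as $0$ whenever run $i$ fails, so that we do not need to condition on the success event and can apply Chernoff directly to the independent $\{X_{v,i}\}_i$. The choice of parameter $\epsilon/4$ when invoking $A$ (rather than $\epsilon$) is precisely what provides slack for both failure modes to fit under the $\epsilon/2$ bound used in the expectation computation, which in turn is what lets the Chernoff deviation $(1-\epsilon)t$ in the lemma statement go through.
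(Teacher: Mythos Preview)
Your proposal is correct and follows essentially the same approach as the paper: run $t=\Theta(\tfrac{1}{\varepsilon}\log\tfrac{n}{f'})$ independent parallel copies of $A(n,\varepsilon/4)$ on disjoint palettes, bound the per-run uncoloring probability of a fixed node by $\varepsilon/4 + f(n,\varepsilon/4)\le \varepsilon/2$ (absorbing the global-failure event into the indicator), apply Chernoff, and union-bound over all $n$ nodes. The paper phrases the indicators complementarily ($X_i=1$ when uncolored) and plugs in the explicit constant $t=\tfrac{6}{\varepsilon}\log\tfrac{n}{f'}$, but the argument is identical.
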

\begin{proof}
	We run $A(n , \varepsilon / 4)$ for $t = \frac{6}{\varepsilon}\log \frac{n}{f'}$ times in parallel. Clearly, the running time is still $T(n,\varepsilon/4)$. For each execution, we use an independent palette of colors of size $p$, and hence we obtain a palette of $pt$ total colors. We need to prove that, with probability at least $1 - f'$, we assign at least $(1 - \varepsilon) q t$ colors to each node.
	
	Consider an arbitrary node $u$. Let $X_i=1$ if node $u$ is uncolored during execution $i$, and $X_i=0$ otherwise. By assumption, a node is uncolored with probability at most $\varepsilon / 4 + f(n,\varepsilon / 4) \le \varepsilon/2$, hence $P(X_i=1)\le\varepsilon/2$. Let $X=\sum_{i=1}^{t}X_i$. By linearity of expectation, we get that $\mathbb{E}[X]\le\varepsilon t / 2$. By a Chernoff bound, we get that: 
	\[
	P(X\ge \varepsilon t)\le e^{\frac{-\varepsilon t}{6}}\le \frac{f'}{n}.
	\]
	By a union bound we get that each node is uncolored in at most $\varepsilon t$ colorings with probability at least $1 - f'$. Hence, with probability at least $1 - f'$, each node receives $q$ colors for at least $(1-\varepsilon)t$ times.
\end{proof}

\subsection{From private randomness to shared randomness}
We now prove a useful lemma, that allows us to reduce the number of random bits used by a randomized algorithm. We will later use this lemma to derandomize fractional coloring algorithms without increasing the support too much. Note that, in the statement of the lemma, the number of bits used by the original algorithm does not play a role in the resulting algorithm. In fact, as a starting point, we essentially only need to know that the amount of randomness, as a function of $n$, can be bounded.
\begin{lemma}\label{lem:lessbits}
	Let $A$ be a randomized algorithm that runs in $T$ rounds and solves some problem $P$ with probability of success at least $1 - f$, by using $b = b(n)$ bits of private randomness, where nodes have no private inputs except for their random bit strings and their identifiers. Then, there exists a randomized algorithm $A'$ that uses only $O(\log \frac{n}{f})$ bits of shared randomness and solves $P$ in $T$ rounds with probability of success at least $1 - 2f$.  
\end{lemma}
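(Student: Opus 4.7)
The plan is to use a probabilistic existence argument (in the spirit of Newman's theorem on public versus private randomness) to hardcode into $A'$ a small collection $\mathcal{R}=\{\vec R_1,\ldots,\vec R_s\}$ of ``canonical'' random tapes, and let the shared randomness merely select an index $i\in[s]$. Since a random tape for $A$ can be viewed as a function that assigns a $b(n)$-bit private string to each possible identifier in $\{1,\dots,n^c\}$, once $\mathcal{R}$ is baked into the code of $A'$ and an index $i$ is announced via the shared bits, each node $v$ can locally look up its own private string $\vec R_i(\ID(v))$ and then simulate $A$ with this string. This needs no extra communication, so $A'$ runs in the same $T$ rounds and uses only $\lceil\log s\rceil$ shared random bits.

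The core step is to show that $\mathcal{R}$ can be chosen so that on \emph{every} $n$-node graph with IDs from $[n^c]$, a uniformly random tape from $\mathcal{R}$ makes $A$ succeed with probability at least $1-2f$. Draw $\vec R_1,\ldots,\vec R_s$ independently from the native distribution over private tapes used by $A$. Fix any $n$-node graph $G$ with any ID assignment, and let $Y_i$ be the indicator that $A$ fails on $G$ under private randomness $\vec R_i$. By hypothesis $\Pr[Y_i=1]\le f$, so $\E\big[\sum_i Y_i\big]\le fs$, and a Chernoff bound gives $\Pr\big[\sum_i Y_i>2fs\big]\le \exp(-\Omega(fs))$. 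The number of $n$-node graphs with IDs drawn from $[n^c]$ is at most $\binom{n^c}{n}\cdot 2^{\binom{n}{2}}=2^{O(n^2)}$, so choosing $s=\Theta(n^2/f)$ makes this tail probability, multiplied by the union-bound factor, strictly less than $1$. Hence a good $\mathcal{R}$ exists and can be hardcoded into $A'$, giving shared-randomness complexity $\lceil\log s\rceil=O(\log(n/f))$.

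Algorithm $A'$ therefore proceeds as follows: the $O(\log(n/f))$ shared bits encode an index $i\in[s]$, and every node simulates $A$ with private tape $\vec R_i$; on the actual input graph, $A$ fails with probability at most $2f$ over the choice of $i$, as required. The point requiring the most care is that the argument is insensitive to $b(n)$ being arbitrarily large: we sample full tapes rather than trying to compress them, and because \LOCAL permits unbounded local computation and arbitrarily long code, storing $\mathcal{R}$ and retrieving its entries is ``free''. The remaining subtlety, and the reason the hypothesis explicitly restricts node inputs to IDs and random bits, is that our union bound ranges only over labeled graphs; allowing additional per-node inputs would blow up the number of ``instances'' that must be simultaneously handled and would force one to resort to a locality-based union bound over $T$-hop neighborhoods or to $k$-wise independent hashing instead.
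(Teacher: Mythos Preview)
Your proof is correct and follows essentially the same approach as the paper: a Newman-style probabilistic-method argument that samples $s=\Theta(n^2/f)$ random tapes, applies a Chernoff bound per labeled graph, and takes a union bound over the at most $\binom{n^c}{n}\cdot 2^{\binom{n}{2}}$ many $n$-node graphs with ID assignments. Your added remark on why the hypothesis restricts private inputs to IDs and random bits matches the paper's implicit use of this assumption in bounding the number of instances.
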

\begin{proof}
	The proof follows a standard argument along the lines of Newman's theorem in communication complexity (see, e.g., \cite{kushilevitz_nisan_1996}).
	Let $B = N b$, where $N = n^c$, for some constant $c\ge 1$, is the size of the ID space. Note that any algorithm that uses $b$ bits of private randomness can be trivially converted into an algorithm that uses $B$ bits of shared randomness. Also, note that by fixing the string of shared randomness used by the nodes, we obtain a deterministic algorithm.
	
	Let $\ell = 2^B$, and let $R = \{R_1, \ldots, R_\ell\}$ be the set of possible shared random bit string assignments. We will prove below, using the probabilistic method, that there is a set $S \subseteq R$ of size $k = O(\frac{n^2}{f})$ satisfying that, for any graph $G$ of $n$ nodes, algorithm $A$ fails for at most a $2f$ fraction of the strings in $S$. This means that we can construct an algorithm $A'$ that, given a bit string of shared randomness of length $O(\log k ) = O(\log \frac{n}{f})$, chooses an element from $S$ uniformly at random, and then executes the $T$-round algorithm $A$ by using that bit string. This algorithm $A'$ runs in $T$ rounds and solves $P$ with a failure probability of at most $2f$.
	
	We now prove that, by choosing $S$ at random, there is non-zero probability that it satisfies the requirements.
	We construct $S = \{s_1, \ldots , s_k\}$ by sampling with replacement $k$ strings from $R$. Consider an arbitrary graph $G$ of $n$ nodes. Let $X_i=1$ if the execution of $A$ on $G$ by using the bit string $s_i$ would fail. Otherwise, let $X_i=0$. Since $s_i$ has been chosen uniformly at random, and since the failure probability of $A$ is at most $f$, then $P(X_i=1) \le f$.
	By linearity of expectation, it holds that $\mathbb{E}[X]\le f k$. Let $X = \sum_{1 \le i \le k} X_i$. Note that the variables are clearly independent. Hence, by a Chernoff bound, we get that 
	\[
	P(X \ge 2 f k) \le e^{-\frac{f k }{3}}.
	\]
	If $X \ge 2 f k$ we say that $S$ is \emph{bad} for $G$. Note that there are at most $2^{n^2}$ graphs of $n$ nodes with no private inputs, and that there are at most $\binom{N}{n}$ possible ID assignments on each graph. Hence, by a union bound, the probability that $S$ is bad for \emph{some} graph of $n$ nodes is at most 
	\[
		\binom{N}{n} \cdot 2^{n^2} \cdot  e^{-\frac{f k }{3}} \le e^{c n \ln n + n^2 - \frac{f k}{3}}.
	\]
	Choosing $k = 6n^2/f$ ensures that the above expression is strictly less than $1$, for $n$ sufficiently large. Note that $k = O(\frac{n^2}{f})$.
\end{proof}

\subsection{From randomized fractional colorings to deterministic fractional colorings}
We now show that, given a randomized algorithm for fractional coloring, it is possible to obtain a deterministic algorithm with the same running time, at the cost of increasing the support.
\begin{lemma}\label{thm:genericB}
	Assume there exists a randomized $T$-round algorithm $A$ that computes a $(p : q)$-coloring with probability at least $1 - f$. Then, there exists a deterministic $T$-round algorithm that computes a $(p' : q')$-coloring, where $p' = p t$, $q' = (1-2f)q t$, and $t = 2^{O(\log \frac{n}{f})}$.
\end{lemma}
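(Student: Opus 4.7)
The plan is to combine \Cref{lem:lessbits} with a brute-force enumeration over all possible shared random strings. First, apply \Cref{lem:lessbits} to $A$ to obtain a $T$-round algorithm $A'$ that draws its shared random string uniformly from a fixed set $S = \{s_1,\dots,s_k\}$ of size $k = 2^{O(\log(n/f))}$ and produces a valid $(p:q)$-coloring with probability at least $1 - 2f$. Because the sampling is uniform over $S$, the probabilistic construction inside \Cref{lem:lessbits} guarantees that on every graph $G$ at least $(1-2f)k$ of the strings in $S$ lead $A'$ to succeed globally.

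Next I would define the deterministic algorithm $A''$ by running $A'$ in parallel once for each $s_i \in S$, with the $i$-th execution using a dedicated palette $P_i$ of $p$ fresh colors; the overall palette has $p' = pk = pt$ colors. All $k$ copies can be executed simultaneously in $T$ rounds of the \LOCAL model, since messages may carry the full $k$-tuple of partial outputs and there are no bandwidth constraints. After $T$ rounds every node $v$ knows its own assigned set $X_v^{(i)} \subseteq P_i$ and that of every neighbor, for each $i$. The final output of $v$ is $\bigcup_{i=1}^{k} Y_v^{(i)}$, where
\[
Y_v^{(i)} := \begin{cases} X_v^{(i)} & \text{if } |X_v^{(i)}| \ge q \text{ and } X_v^{(i)} \cap X_w^{(i)} = \emptyset \text{ for every neighbor } w,\\ \emptyset & \text{otherwise}.\end{cases}
\]

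The local filter ensures that the combined output is a proper $(p':q')$-coloring: within each execution $i$, the filter discards $v$'s set whenever it would cause a violation visible at $v$, and the palettes $P_i$ are pairwise disjoint across executions. For the color count, any $s_i$ for which $A'$ globally produces a valid $(p:q)$-coloring satisfies $Y_v^{(i)} = X_v^{(i)}$ with $|X_v^{(i)}| \ge q$ at every node $v$, so such an $s_i$ contributes $q$ colors to every $v$. Since at least $(1-2f)k$ strings in $S$ have this property, every node collects at least $(1-2f)qk = (1-2f)qt$ colors, matching the target $q'$ with $t = k = 2^{O(\log(n/f))}$.

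The main subtlety is handling those $s_i$ for which $A'$ fails globally without letting an improper coloring leak into the final output; the per-execution local check on each $Y_v^{(i)}$ resolves this at zero extra communication cost, since each node already has the required information from the parallel simulation. Finally, because $A''$ merely enumerates the fixed set $S$ rather than sampling from it, no random bits are used and the algorithm is deterministic, completing the argument.
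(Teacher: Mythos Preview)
Your approach mirrors the paper's: apply \Cref{lem:lessbits} to reduce to $O(\log(n/f))$ bits of shared randomness, then deterministically enumerate all $t=2^{O(\log(n/f))}$ strings in parallel with disjoint palettes and take the union. The paper's own proof stops there and only argues the color count (at least a $(1-2f)$-fraction of the strings yield a global success, hence every node collects at least $(1-2f)qt$ colors); it does not include any per-execution filter.

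The extra filter you introduce is a sensible safeguard, but your claim that it comes ``at zero extra communication cost'' is not correct. After $T$ rounds, node $v$ can compute its own output $X_v^{(i)}$, which is a function of $v$'s $T$-hop neighborhood, but it cannot compute $X_w^{(i)}$ for a neighbor $w$: that depends on $w$'s $T$-hop neighborhood, which reaches distance $T{+}1$ from $v$. Your filter therefore requires one additional communication round, so the resulting algorithm is $(T{+}1)$-round rather than $T$-round as the lemma states. This is harmless asymptotically, but it does not match the statement. If you want to keep the round count at exactly $T$, you need the assumption---implicit in the paper's proof and satisfied in every place the paper invokes this lemma (always after \Cref{thm:generic}, whose output is always a proper, possibly short, coloring)---that the only failure mode of $A$ is that some node receives fewer than $q$ colors while properness is never violated; under that assumption no filter is needed and the paper's short argument goes through.
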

\begin{proof}
	We first apply \Cref{lem:lessbits} to reduce the amount of randomness required by algorithm $A$ to $B = O(\log \frac{n}{f})$ bits of shared randomness, obtaining a new algorithm $A'$ that runs in $T$ rounds and computes a $(p : q)$ coloring with probability at least $1 - 2f$.  We cannot directly run $A'$: since nodes are not provided with shared randomness, this is not possible. Instead, we run $A'$ in parallel for all possible $t = 2^B$ values of the shared random bit string assignment. In each run, we use an independent palette of $p$ colors, and hence we use $p t$ colors in total. Since algorithm $A'$ succeeds with probability at least $1 - 2f$, then in at least a fraction $1-2f$ of the executions all nodes receive at least $q$ colors. Hence, we obtain that each node receives at least $(1-2f)q t$ colors in total.
\end{proof}

\section{Fast algorithm}\label{sec:fastalgorithm}
In this section, we present an algorithm that is able to compute a fractional $(\Delta+\varepsilon)$-coloring, with a running time that depends only polynomially in $\Delta$, and that requires small support. Later, we will show how to modify the algorithm to obtain a logarithmic dependency in $\Delta$, at the cost of having a much larger support. More formally, we start by proving the following theorem.
\fastfractionalcoloring*

\paragraph*{High level idea}
Our algorithm, on a high level, works as follows. We first compute a clustering of the graph by computing an $(\alpha,\beta)$-ruling set, for suitable parameters $\alpha$ and $\beta$, and by connecting each node to the cluster centered at the nearest ruling set node. We then proceed by coloring the nodes in a special way. In particular, we compute $q$ different colorings, such that each coloring is a $\Delta$-coloring of the graph, where some nodes are allowed to be uncolored, and such that each node is uncolored in at most one of the $q$ colorings.

In order to prove that such a coloring can be computed efficiently, we will exploit an important property of the computed clustering: for each cluster, it holds that it either contains many nodes (at least $q$), or it contains a degree-choosable component. We will show that this implies that we can color the nodes such that, in each cluster that contains a degree-choosable component, the $\Delta$-coloring problem is solved properly, while in the other clusters we can choose a specific node to leave uncolored. If the cluster is large enough, we can have a different uncolored node for each of the $q$ colorings, obtaining that each node is uncolored for at most one coloring.

The computed coloring allows us to assign at least $q-1$ colors to each node of the graph, where the colors come from a palette of size $q\Delta$, and hence we obtain a $(q\Delta : q-1)$-coloring. By choosing the right size of the clusters, we can also prove that the running time is polynomial in $\Delta$, and more precisely that it is only slightly worse than the time required to solve the $(\mathrm{degree}+1)$-list coloring problem, which we use as a subroutine.

\subsection{The clustering}
We start by proving that it is possible to compute a clustering of a graph in such a way that it satisfies some desirable properties. In particular, we prove the following lemma.
\begin{lemma}\label{lem:clustering}
	Let $\alpha = c(1+\log_\Delta q)$, for some constant $c$ and an arbitrary integer $q>0$. 
	It is possible to compute a clustering of a graph $G$ such that the following holds.
	\begin{itemize}
		\item Each cluster has a strong diameter of at most $2 \alpha^2 \log \Delta$.
		\item Each cluster contains: 
		\begin{itemize}
			\item at least $q$ nodes, or 
			\item a node of degree at most $\Delta - 1$, or
			\item a degree-choosable component, such that all neighbors of the nodes in the degree-choosable component are also contained in the cluster.
		\end{itemize}

	\end{itemize}
	This clustering can be computed in $O(\alpha^2 \log \Delta + \alpha \log^* n)$ deterministic rounds.
\end{lemma}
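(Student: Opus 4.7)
The plan is to build the clustering in two phases: first compute a suitably sparse ``skeleton'' that will serve as cluster centers, and then grow clusters around the skeleton while certifying the structural property. I would start by applying Linial's algorithm to obtain an $O(\Delta^2)$-coloring of $G$ in $O(\log^* n)$ rounds. This coloring is used both as a symmetry-breaking primitive for subsequent ruling-set-style constructions and as a priority order for resolving conflicts when two growing clusters meet.

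Next I would iteratively construct the clusters in $\alpha$ stages. Initially every node is a singleton cluster. At stage $i = 1, \dots, \alpha$, each cluster that has not yet certified any of the three termination conditions (size $\geq q$, contains a node of degree $\leq \Delta-1$, or contains a ``closed'' degree-choosable component) attempts to grow by BFS for $\Theta(\alpha \log \Delta)$ hops and merges with any clusters it collides with, using the $O(\Delta^2)$-coloring to pick a unique representative among the colliding clusters. Each stage can be implemented in $O(\alpha \log \Delta + \log^* n)$ rounds, where the $\log^* n$ term accounts for re-computing a symmetry-breaking primitive among the updated (much fewer) representatives; summing over $\alpha$ stages yields the claimed runtime $O(\alpha^2 \log \Delta + \alpha \log^* n)$ and strong-diameter bound $2\alpha^2 \log \Delta$.

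The heart of the argument, and what I expect to be the main obstacle, is certifying the \emph{structural property}: whenever a cluster $C$ has fewer than $q$ nodes and contains no node of degree $< \Delta$ in $G$, it must contain a degree-choosable component whose $G$-neighborhood lies inside $C$. My plan is to invoke the Erd\H{o}s--Rubin--Taylor characterization that a connected graph is degree-choosable unless every block is a clique or an odd cycle (a ``Gallai tree''). Let $v$ be the center of $C$ and $r$ its radius. Any node at distance $\leq r-1$ from $v$ has all its $G$-neighbors inside $C$, so the inner ball $H := B_G(v, r-1)$ is closed-in-$C$; hence it suffices to show $G[H]$ is not a Gallai tree. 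Since $G$ is $\Delta$-regular on $H$, each non-cut vertex of a block that is a $K_m$ or odd cycle has $G$-degree $\Delta$ and block-degree at most $m-1$ or $2$, which forces the Gallai tree to ``branch'' outwards via cut vertices attached to new blocks of size $\Theta(\Delta)$. A careful inductive size bound then gives that a Gallai tree of radius $r-1$ in a $\Delta$-regular graph must contain $\Omega(\Delta^{(r-1)/c})$ vertices for some constant $c$; choosing $\alpha = c(1+\log_\Delta q)$ makes the inner radius $\alpha^2\log\Delta - 1$ large enough that this count exceeds $q$, contradicting $|C| < q$.

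Thus the plan reduces the lemma to a purely combinatorial ``Gallai trees in $\Delta$-regular graphs cannot be both small and deep'' statement. The remaining technicalities are (i) defining the per-stage growth so that the three termination conditions are detectable locally within the stage's radius, and (ii) verifying that the merging rule, driven by the initial $O(\Delta^2)$-coloring, keeps each final cluster connected and within the stated strong-diameter bound; both should follow from standard ruling-set and cluster-merging machinery once the structural lemma above is in place.
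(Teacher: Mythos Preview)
Your structural argument via the Erd\H{o}s--Rubin--Taylor characterization is exactly the content of the lemma the paper invokes as a black box (Lemma~8 of \cite{GhaffariHKM21}): in a $\Delta$-regular ball with no degree-choosable subgraph, the ball has $(\Delta-1)^{r/2}$ nodes. So on that front you are re-deriving what the paper simply cites, and your Gallai-tree sketch is the right idea.

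The gap is in the clustering construction, and it propagates into the structural step. You write ``let $v$ be the center of $C$ and $r$ its radius; any node at distance $\leq r-1$ from $v$ has all its $G$-neighbors inside $C$''. This only follows if $C$ \emph{contains} the ball $B_G(v,r)$, which your iterative merging process does not guarantee: a cluster of radius $r$ need not be a ball, and a node at distance $r-1$ from $v$ may well lie in a neighboring cluster that won the merge. What the structural argument actually needs is a guaranteed \emph{inner radius} $\rho$ such that $B_G(v,\rho)\subseteq C$; then you apply the Gallai-tree/expansion bound with $r=\rho$, not with the full cluster radius. Note that $\rho=\Theta(\alpha)=\Theta(\log_\Delta q)$ already suffices to get $q$ nodes, so your invocation with $r=\alpha^2\log\Delta$ is vastly overkill and signals that the two radii have been conflated.

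The paper sidesteps all of this with a much simpler one-shot construction: compute a $(2,\alpha\log\Delta)$-ruling set on $G^{\alpha-1}$, giving an $(\alpha,(\alpha-1)\alpha\log\Delta)$-ruling set of $G$, and assign each node to its nearest center. The $\alpha$-separation of centers immediately yields the inner-ball guarantee $B_G(v,\lfloor\alpha/2-1\rfloor)\subseteq C$, which is precisely the radius at which the expansion lemma produces $q$ nodes. The runtime $O(\alpha^2\log\Delta+\alpha\log^* n)$ comes directly from the known ruling-set bound applied on $G^{\alpha-1}$; no staged merging or per-stage $\log^* n$ recoloring is needed.
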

In order to prove this lemma, we will use the following lemmas present in the literature.
\begin{lemma}[Lemma 8 of \cite{GhaffariHKM21}]\label{lem:choosable-or-expansion}
	Let $G=(V,E)$ be a graph and $v \in V$ be a node such that inside the $r$-radius neighborhood
	of $v$ there are no degree-choosable components and every node has degree $\Delta$. Then, for each even $r$,
	there are at least $(\Delta - 1)^{r/2}$ nodes at distance $r$ from $v$.
\end{lemma}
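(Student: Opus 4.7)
My plan is to combine the classical Erd\H{o}s--Rubin--Taylor characterization of degree-choosable graphs with a two-step BFS growth argument. Recall that a connected graph is \emph{not} degree-choosable if and only if each of its blocks is a complete graph or an odd cycle --- a \emph{Gallai tree}. The hypothesis therefore says that every connected component of the induced subgraph on $B_r(v)$ is a Gallai tree, and that every vertex of $B_r(v)$ has degree exactly $\Delta$ in $G$. I would proceed by induction on $r$ in steps of two, writing $L_i$ for the set of vertices at distance exactly $i$ from $v$. The base case $r=0$ is immediate since $|L_0|=1=(\Delta-1)^0$.

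For the inductive step, fix a BFS tree $T$ rooted at $v$ and aim to establish $|L_{r+2}| \ge (\Delta-1)\cdot|L_r|$, which yields $|L_{r+2}| \ge (\Delta-1)^{(r+2)/2}$ as required. The core idea is to assign to each $u \in L_r$ a set $S(u)\subseteq L_{r+2}$ of at least $\Delta-1$ grandchildren of $u$ in $T$ and to argue that the sets $S(u)$ are pairwise disjoint. Every $u$ has $\Delta$ neighbours in $G$: one is the BFS parent, and the remaining $\Delta-1$ are either tree children at depth $r+1$ or produce back- or same-layer edges. Each such non-tree edge, together with the corresponding tree paths, closes a cycle inside a single block of the component $C$ of $v$; by the Gallai-tree property this block is a clique or an odd cycle, which tightly constrains the number of wasted edges at $u$. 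Walking one further step from $u$'s surviving tree children along edges that leave the currently occupied block --- such edges must exist because the degree-$\Delta$ condition combined with the bounded size of clique blocks and the structure of odd-cycle blocks forces fresh incident blocks --- then produces $\Delta-1$ distinct vertices at depth $r+2$.

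The main obstacle will be the pairwise disjointness across different $u\in L_r$: two vertices of $L_r$ that lie in a common block of $C$ could naively be credited with overlapping descendants. The two-step pairing is precisely what absorbs this, since the final edge used to move from $L_{r+1}$ to $L_{r+2}$ must leave the shared block, and distinct $u$'s leave through edges in distinct branches of the block-cut tree of $C$. For odd-cycle blocks the oddness is crucial: even cycles would allow two tree paths to meet again at the same BFS depth, whereas an odd cycle forces the two antipodal paths to reach vertices of different parity and hence different identity. Formalising this block-by-block accounting via the block-cut tree of $C$, and ruling out the degenerate configuration where $v$ sits inside an isolated $K_{\Delta+1}$ (which is excluded by the requirement that the ball extends to full radius $r$ under the uniform degree condition), is the technical heart of the argument.
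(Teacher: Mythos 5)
The paper does not prove this lemma; it is imported verbatim from [GhaffariHKM21] (cited as ``Lemma 8''), so there is no in-paper proof to compare against. Your high-level plan --- invoke the Erd\H{o}s--Rubin--Taylor / Gallai-tree characterization and run a two-step BFS expansion --- is indeed the right framework and the one used in that source, but your sketch leaves the two load-bearing claims unestablished, which is precisely where the work lies. For the per-vertex count, you assert each $u \in L_r$ can be credited with $\Delta-1$ distinct vertices of $L_{r+2}$, but never do the arithmetic: a vertex $u$ that sits in a clique block $K_s$ with entry cut vertex equal to $u$'s BFS parent has only $\Delta - s + 1$ edges leaving the block toward $L_{r+1}$ (the remaining $s-2$ go to siblings in $L_r$), which for $s \geq 3$ is strictly less than $\Delta-1$; the deficit must be recovered in the second hop and you do not show that it is. For disjointness, the correct mechanism is that a common grandchild $w$ of distinct $u,u' \in L_r$ closes a cycle with the two BFS paths from $v$; that cycle lies in a single block; and one then verifies that neither a clique block (which would force $w$ adjacent to the divergence point and hence at too small a depth) nor an odd-cycle block (where the entry cut vertex determines the BFS depth of every block vertex) can place $u,u',w$ at depths $r,r,r+2$ simultaneously. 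Your appeal to ``distinct branches of the block-cut tree'' gestures at this but is not an argument, and your parity statement is wrong as phrased: in an odd cycle the two BFS arms \emph{do} reach the same depth at the far end --- they just terminate at two distinct, adjacent vertices rather than colliding at one, which is the feature that distinguishes it from an even cycle.

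Separately, your exclusion of $K_{\Delta+1}$ is not justified by the lemma's hypotheses: nothing requires the $r$-ball to ``extend to full radius.'' The correct reason it cannot arise (given the paper's standing assumption $\Delta \geq 3$) is that $K_{\Delta+1}$ contains $K_{\Delta+1}$ minus an edge, which is $2$-connected and neither a clique nor a cycle, hence degree-choosable, so the no-degree-choosable-component hypothesis already fails on any ball containing it.
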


\begin{lemma}[Ruling sets, \cite{Balliu0O20,SchneiderEW13}]\label{lem:ruling-sets}
	A $(2,\beta)$-ruling set can be computed in time $O(\beta \Delta^{2/\beta} + \log^* n)$ deterministic rounds.
\end{lemma}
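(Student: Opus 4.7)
}
The plan is to combine two classical ingredients. First, apply Linial's color reduction to compute an $O(\Delta^2)$-coloring of $G$ in $O(\log^* n)$ rounds; this part is a direct invocation of Linial's algorithm and contributes the additive $O(\log^*n)$ term. The remaining work is to reduce an $m$-coloring (with $m = O(\Delta^2)$) to a $(2,\beta)$-ruling set in $O(\beta\, m^{1/\beta}) = O(\beta\,\Delta^{2/\beta})$ rounds, following the Schneider-Elkin-Wattenhofer (and Kuhn-Wattenhofer) reduction.

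The reduction proceeds in $\beta$ phases, each costing $O(m^{1/\beta})$ rounds, and maintains the following invariant. After phase $i$, there is a set $S_i \subseteq V$ together with a proper coloring of $G[S_i]$ using at most $m^{(\beta-i)/\beta}$ colors, such that every node of $V$ is at hop-distance at most $i$ from some node of $S_i$, and $S_0 = V$ with the input $m$-coloring. To implement phase $i+1$, group the current colors into blocks of size $s := \lceil m^{1/\beta}\rceil$, so that there are at most $m^{(\beta-i-1)/\beta}$ blocks. Within each block, run a simple sequential sweep through the $s$ colors: a node of color $j$ in a block joins $S_{i+1}$ iff no node in its $1$-neighborhood (inside $S_i$ and the same block) with a smaller color has already joined. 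Since each block is processed in $s = O(m^{1/\beta})$ rounds, the whole phase runs in $O(m^{1/\beta})$ rounds. The new ``color'' of a node in $S_{i+1}$ is the index of its block, giving at most $m^{(\beta-i-1)/\beta}$ colors, and every node of $S_i$ is adjacent to at least one chosen node of its block (the minimum-colored one in its block-neighborhood), so every node of $V$ is at distance at most $i+1$ from $S_{i+1}$. After $\beta$ phases, $S_\beta$ is colored with a single color, hence is an independent set, and every node of $V$ is within distance $\beta$ of $S_\beta$; that is, $S_\beta$ is a $(2,\beta)$-ruling set.

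Putting both steps together, the total deterministic running time is $O(\log^* n) + O(\beta\,\Delta^{2/\beta}) = O\bigl(\beta\,\Delta^{2/\beta} + \log^* n\bigr)$, matching the claim. Correctness follows from the inductive invariant above: independence is immediate because $S_\beta$ is a subset of a single color class of $G$, and $\beta$-domination follows by composing the ``distance $\leq 1$ per phase'' guarantees across the $\beta$ phases.

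The step most likely to require care is the phase implementation and its analysis: one has to verify that processing a block sequentially in $s$ rounds indeed preserves properness of the induced block-coloring (so that ``block index'' is a valid coloring of $S_{i+1}$) and that the distance accounting is tight, i.e., no node is left more than one extra hop away per phase. Everything else is bookkeeping and substitution of $m = O(\Delta^2)$; the citation to \cite{Balliu0O20,SchneiderEW13} is invoked precisely to import this analysis.
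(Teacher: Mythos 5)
The paper does not prove this lemma; it is imported verbatim from \cite{Balliu0O20,SchneiderEW13}, so there is no in-paper argument to compare against. Your reconstruction is the standard color-reduction ruling-set algorithm behind that citation and is correct: Linial's coloring gives the $O(\log^* n)$ additive term, and the $\beta$-phase block sweep with blocks of size $\lceil m^{1/\beta}\rceil$ correctly preserves both a proper coloring of the surviving set (two adjacent survivors cannot share a block, since the smaller color would eliminate the larger; different blocks give different new colors) and the distance invariant (maximality within a block gives the $+1$ per phase). The only loose end is the bookkeeping around ceilings in $\lceil c_i/s\rceil$ when $m$ is not an exact $\beta$-th power, which can leave a few extra colors after $\beta$ rounds; one fixes it with a slightly larger $s$ or $O(1)$ extra phases without affecting the $O(\beta\Delta^{2/\beta}+\log^* n)$ bound. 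You flag exactly the right spots as needing care.
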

We are now ready to prove \Cref{lem:clustering}.
\begin{proof}
	We start by computing an $(\alpha, (\alpha-1)\alpha \log \Delta)$-ruling set, by computing a $(2,\alpha \log \Delta)$-ruling set on $G^{\alpha-1}$, the $(\alpha-1)$th power of $G$. By applying \Cref{lem:ruling-sets} with $\beta = \alpha \log \Delta$, this ruling set can be computed in $T = O(\alpha^2 \log \Delta + \alpha \log^* n)$ deterministic rounds. Then, in additional $O(\alpha^2 \log \Delta)$ rounds, each node finds the nearest ruling set node (breaking ties arbitrarily), and joins its cluster.
	
	Clearly, the running time requirement is satisfied. We need to prove that this clustering satisfies the required properties. The first property is clearly satisfied, since $\alpha^2 \log \Delta$ is an upper bound on the distance between an arbitrary node and its nearest ruling set node.
	
	We now show that the second property is also satisfied. Let $v$ be the ruling set node of an arbitrary cluster $C$, that is, its center. By the definition of $(\alpha,\beta)$-ruling set, all nodes at distance at most $\lfloor \alpha /2 -1 \rfloor$ from $v$ are all contained in $C$. Consider the set $S$ of nodes at distance at most $k$ from $v$, where $k \in \{\lfloor \alpha /2 -2 \rfloor, \lfloor \alpha /2 -3 \rfloor  \}$ is even. If there is a node of degree at most $\Delta-1$ in $S$, then the property is clearly satisfied. Otherwise, all nodes in $S$ have degree $\Delta$, and by \Cref{lem:choosable-or-expansion} we get that either there is a degree-choosable component in $S$, or there are at least $(\Delta-1)^{k/2}$ nodes in $S$. In the first case, note that the neighbors of nodes in $S$ are all contained in $C$, and hence the property is satisfied. In the second case, we obtain that the cluster contains at least $(\Delta-1)^{\lfloor c(1+\log_\Delta q) /2 -3 \rfloor/2}$ nodes, that, for a large enough constant value of $c$, is at least $q$, and hence the property is satisfied in this case as well.	
\end{proof}

\subsection{The algorithm}
We now describe an algorithm that computes a fractional $(\Delta+\varepsilon)$-coloring. In the following, we will make use of the notion of partial $\Delta$-coloring (see \Cref{def:partialcoloring}).

On a high level, the main idea of the algorithm is to compute $q$ (possibly) different partial $\Delta$-colorings (where each of the colorings come from a different palette), such that each node is uncolored in at most $1$ of the colorings. In this way, we can assign $q-1$ colors to each node, from a palette of $q\Delta$ total colors. We now show how the properties stated in \Cref{lem:clustering} can be used for this purpose.
\begin{lemma}\label{lem:onecolor}
	Let $G$ be a graph that is clustered according to \Cref{lem:clustering}, where each cluster that contains at least $q$ nodes also contains a special marked node. Let $T$ be the time required to solve the $(\mathrm{degree}+1)$-list coloring problem, and let $R$ be the bound on the diameter of the clusters. Then, in $O(T \cdot R)$ deterministic rounds it is possible to solve the partial $\Delta$-coloring problem such that only marked nodes remain uncolored.
\end{lemma}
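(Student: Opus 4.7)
The plan is to orient each cluster toward a ``slack object'' that absorbs the missing color of a $\Delta$-coloring, and then color the cluster layer by layer from the outside in. Using \Cref{lem:clustering}, for each cluster $C$ we first designate a slack $s_C$: the marked node if $|C|\ge q$ (case~i), else a node of $G$-degree at most $\Delta-1$ if one exists (case~ii), and otherwise the degree-choosable component $H\subseteq C$ with $N_G[H]\subseteq C$ (case~iii). Each node then computes its BFS distance to $s_C$ inside $C$ in $O(R)$ rounds, where $R = 2\alpha^2\log\Delta$ is the strong-diameter bound guaranteed by \Cref{lem:clustering}. Call a node at BFS distance $\ell$ a layer-$\ell$ node, and observe that each non-slack node has a BFS parent at the next smaller layer.

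We then color non-slack nodes in $R$ global phases indexed $\ell=R, R-1, \ldots, 1$. In phase $\ell$ the active set $A_\ell$ is the union over all clusters of their layer-$\ell$ nodes. We equip each active $v$ with the list $L_v := \{1,\ldots,\Delta\}\setminus\{\text{colors already used by neighbors of }v\}$ and invoke the assumed $(\mathrm{degree}+1)$-list coloring algorithm on $G[A_\ell]$ with these lists in $T$ rounds (any $O(\Delta^2)$-coloring of $G$ restricts to one of $G[A_\ell]$). The key invariant is $|L_v|\ge \deg_{G[A_\ell]}(v)+1$: partitioning $v$'s neighbors into already colored ($c$), active ($a = \deg_{G[A_\ell]}(v)$), and uncolored-inactive ($u$), we get $|L_v|=\Delta-c=a+u+(\Delta-\deg_G(v))$, so the bound reduces to $u+(\Delta-\deg_G(v))\ge 1$. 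This is immediate when $\deg_G(v)<\Delta$; when $\deg_G(v)=\Delta$, the BFS parent of $v$ sits at layer $\ell-1$ of $C(v)$, is therefore uncolored and inactive, and contributes $u\ge 1$. External neighbors of $v$, regardless of their own BFS depths, fall unambiguously into one of the three buckets $c,a,u$, so inter-cluster adjacencies do not break the invariant.

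After the $R$ phases we handle the slacks. Case~(i) leaves the marked node uncolored, as required. In case~(ii), the slack has at most $\Delta-1$ already colored neighbors and so a nonempty list; one greedy step finishes it. In case~(iii), once $C\setminus H$ is colored, every $v\in H$ sees a residual list of size at least $\deg_H(v)$ because $N_G[H]\subseteq C$; since $H$ is degree-choosable such a coloring exists, and because $H$ lies inside a ball of radius $R$ we compute it by gathering $H$ at a single node (the \LOCAL model allows unbounded local computation) in $O(R)$ further rounds. Summing the BFS step, the $R$ layered phases of $T$ rounds each, and the final slack step yields the claimed $O(T\cdot R)$ bound. The main obstacle is the cross-cluster coordination during the layered phases: adjacent active nodes may belong to different clusters, so one must certify the per-phase $(\mathrm{degree}+1)$-list coloring prerequisite from purely local information; the invariant above resolves this because each active node's required slack is furnished by its own BFS parent inside its own cluster, so all clusters can run the phases in lockstep without any global scheduling.
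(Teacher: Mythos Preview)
Your proof is correct and follows essentially the same approach as the paper: designate per-cluster slack (marked node, low-degree node, or degree-choosable component), layer the remaining nodes by distance to the slack, and color the layers from the outside in using a $(\mathrm{degree}+1)$-list coloring subroutine, finishing the slack nodes greedily or by brute force. Your treatment is in fact slightly more explicit than the paper's about why the per-phase $(\mathrm{degree}+1)$ condition survives inter-cluster edges---the $c,a,u$ bookkeeping and the observation that the BFS parent inside the node's own cluster supplies the needed slack make this transparent.
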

\begin{proof}
	We prove this lemma by slightly modifying the core of the deterministic $\Delta$-coloring algorithm presented in Ghaffari et al.\ \cite{GhaffariHKM21}. Each node $v$ starts by spending $R$ rounds to gather its entire cluster $C_v$. In this way, it can see if there is a marked node, or a node with degree at most $\Delta - 1$, or a degree-choosable component (at least one of these cases must apply).
	If there is a marked node $z$ in $C_v$, let $S_{C_v} = \{z\}$, otherwise, if there is a node $z'$ with degree at most $\Delta - 1$, then let $S_{C_v} = \{z'\}$, otherwise, let $S_{C_v}$ be the set of nodes of an arbitrary degree-choosable component guaranteed to exist by \Cref{lem:clustering}. Note that, without additional communication, $v$ can compute its distance from the nearest node $s$ in $S_{C_v}$.
	
	Let $G_i$ be the subgraph induced by nodes at distance $i$ from their nearest node in the set, that is, $G_i = \{ u ~|~ \ldist(u,S_{C_u}) = i \}$. Let $G_{>i}$ (resp.\ $G_{\ge i}$) be the graph induced by all nodes contained in $G_j$, for all $j>i$ (resp.\ $j\ge i$). Note that, for all $i > 0$, the nodes of $G_i$, in $G_{\ge i}$, have degree at most $\Delta-1$, since the maximum degree in $G$ is $\Delta$, and all nodes in $G_i$ have at least one neighbor in $G_{i-1}$. Also, note that $G_{>R}$ is empty. 
	
	We proceed as follows. Assume that $G_{>i}$ is properly $\Delta$-colored (we start with $i=R$, where the statement trivially holds, since $G_{> R}$ is empty), and let $c(u)$ be the color of a node $u$ in $G_{> i}$. We show that we can $\Delta$-color $G_{\ge i}$.
	For each node $v$ in $G_i$, we define its list of available colors as $L_v = \{1,\ldots, \Delta\} \setminus \{c(u) ~|~ u \in N(v) \cap G_{>i}\}$. Since the degree of nodes of $G_i$ in $G_{\ge i}$ is at most $\Delta-1$, then $L_v$ defines a $(\mathrm{degree}+1)$-list coloring instance of $G_i$, that can be solved in $T$ rounds. 
	By iterating this procedure for $i=R,\ldots,1$, we obtain that all nodes of $G$, except the ones in $S=\bigcup \{S_{C_v} ~|~ v\in V\}$, are properly $\Delta$-colored. 
	
	Finally, we handle the nodes in $S$. If $S_{C_v}$ contains a marked node, we just leave it uncolored. Otherwise, if $S_{C_v}$ contains a node with degree at most $\Delta - 1$, we color it with an arbitrary available color. Otherwise, if $S_{C_v}$ contains a degree-choosable component, then, for each node $u \in  S_{C_v}$, we define $L_u$ as above. This time, $L_u$ defines a $\mathrm{degree}$-list coloring instance. Note that, in general, the $\mathrm{degree}$-list coloring problem may be unsolvable, but this is never the case in a degree-choosable component, by definition. Since, for each pair of clusters $C_1, C_2$, $S_{C_1}$ and $S_{C_2}$ are non-adjacent, then it is possible to solve all the $\mathrm{degree}$-list coloring instances in parallel, by brute force, in $R$ rounds. 
\end{proof}

We are now ready to describe the algorithm. First of all, nodes start by computing an $O(\Delta^2)$-coloring.
Then, nodes compute the clustering described in \Cref{lem:clustering}. Let $R = O(\alpha^2 \log \Delta)$ be the maximum diameter of the clusters. Then, nodes spend $R$ rounds to check the type of their cluster, that is, if there is a degree-choosable component satisfying the required property, or if there are at least $q$ nodes, or if there is a node with degree at most $\Delta - 1$. In all clusters $C$ containing at least $q$ nodes, we choose $q$ arbitrary distinct nodes $\{v_{C,1},\ldots,v_{C,q}\}$.
Then, we apply \Cref{lem:onecolor} for $q$ times in parallel. During the application $i$, the nodes that are considered marked are $\{v_{C,i}\}$. We obtain $C_1,\dots,C_q$ partial $\Delta$-colorings of $G$, such that each node is uncolored in at most one coloring. Hence, we use a palette of $q\Delta$ colors, such that each node has at least $q-1$ colors, that is, we obtain a $(q\Delta : q-1)$-coloring.

\paragraph*{Time complexity}
The previously described algorithm computes a $(q\Delta : q-1)$-coloring. Hence, in order to prove \Cref{thm:fastfractionalcoloring}, we need to give a bound on its running time.
Computing the $O(\Delta^2)$-coloring can be done in $O(\log^* n)$ rounds. Computing the clustering requires $O(\alpha^2 \log \Delta + \alpha \log^* n)$ rounds. Gathering the cluster requires $O(\alpha^2 \log \Delta)$ rounds. The application of \Cref{lem:onecolor} requires $O(T \cdot \alpha^2 \log \Delta)$ rounds, where $T$ is the time for solving a $(\mathrm{degree}+1)$-list coloring instance given an $O(\Delta^2)$-coloring. Recall that $\alpha = c(1+\log_\Delta q)$. Hence, we obtain an overall time complexity of $O(\alpha^2 \log \Delta \cdot T + \alpha \log^* n)$, where $\alpha = O(1+\log_\Delta q)$.

\subsection{Faster algorithm}
We now show that, at the cost of drastically increasing the number of colors, we can improve the dependency on $\Delta$, and entirely remove the dependency on $n$. In particular, we will prove the following.
\veryfast*

We start by explaining how to remove the $O(\alpha \log^* n)$ dependency from the algorithm of \Cref{thm:fastfractionalcoloring}, obtaining the following intermediate result.
\begin{lemma}\label{lem:veryfastintermediate}
	For any $\varepsilon > 0$, the fractional $(\Delta+\varepsilon)$-coloring problem can be solved deterministically in time
	\[
	O\left(\left(\log\Delta + \frac{\log^2(1/\varepsilon)}{\log\Delta}\right)\cdot T\right),
	\]
	where $T$ is the time required to solve the $(\mathrm{degree}+1)$-list coloring problem given an $O(\Delta^2)$-coloring in input.
\end{lemma}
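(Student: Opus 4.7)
My plan is to prove Lemma \ref{lem:veryfastintermediate} via a randomize-then-derandomize approach, using Lemma \ref{thm:genericB} as the key tool. Observe that the $O(\alpha\log^* n)$ term in the running time of Theorem \ref{thm:fastfractionalcoloring} comes entirely from two deterministic symmetry-breaking subroutines: the preliminary $O(\Delta^2)$-coloring fed to the $(\mathrm{degree}+1)$-list coloring algorithm, and the $(2,\alpha\log\Delta)$-ruling set computed on $G^{\alpha-1}$ inside Lemma \ref{lem:clustering}. Both subroutines admit $\log^* n$-free randomized counterparts: for the initial coloring, each node samples a color from a sufficiently large random range and, through a constant number of Linial-style color-compression rounds, obtains a valid input for the list coloring subroutine with probability at least $1-1/\poly(n)$; for the ruling set, standard randomized $(2,\beta)$-ruling set constructions compute such a set on $G^{\alpha-1}$ in $O(\alpha\beta) = O(\alpha^2\log\Delta)$ rounds with high probability, without any $\log^* n$ preprocessing.

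Plugging these two randomized components into the algorithm of Theorem \ref{thm:fastfractionalcoloring} yields a Monte Carlo algorithm that computes a $(q\Delta:q-1)$-coloring of $G$ in $O(\alpha^2\log\Delta\cdot T)$ rounds with failure probability $f \le 1/\poly(n)$. I will then apply Lemma \ref{thm:genericB} to derandomize it, at the cost of inflating the palette by a factor of $2^{O(\log n)}$; this is the ``drastic increase in the number of colors'' allowed by the statement, while the running time is preserved. Since $f$ is inverse-polynomial, the resulting deterministic $(p':q')$-coloring satisfies
\[
\frac{p'}{q'} \le \frac{q\Delta}{(1-2f)(q-1)} \le \Delta + \varepsilon,
\]
as required. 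Setting $q = \Theta(\Delta/\varepsilon)$ so that $\alpha = O(1+\log_\Delta(1/\varepsilon))$, the running time $O(\alpha^2\log\Delta\cdot T)$ simplifies to $O\bigl((\log\Delta + \log^2(1/\varepsilon)/\log\Delta)\cdot T\bigr)$, matching the claim exactly.

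The main obstacle will be to verify that the randomized replacement of the initial $O(\Delta^2)$-coloring plugs cleanly into the existing $(\mathrm{degree}+1)$-list coloring subroutine without altering its asymptotic running time: a naive random coloring from a $\poly(n)$-sized palette, compressed in one Linial step, yields $O(\Delta^2 \log n)$ colors rather than $O(\Delta^2)$, so one must argue either that the list coloring subroutine tolerates this slightly coarser input within the same bound $T$, or that a few additional compression rounds suffice to reach $O(\Delta^2)$ colors with high probability. The randomized ruling set, by contrast, is standard and should integrate straightforwardly.
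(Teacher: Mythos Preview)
Your approach has a genuine gap, and it is precisely the obstacle you flag at the end: there is no way to obtain a \emph{globally} valid $O(\Delta^2)$-coloring (or an $O(\Delta^{2\alpha})$-coloring of $G^{\alpha-1}$) with probability $1-1/\poly(n)$ in a number of rounds independent of $n$. Neither of your two suggested fixes works. A random coloring from a palette of size $C$ is globally proper with probability at least $1-n\Delta/C$, which forces $C\ge \poly(n)$; compressing a $\poly(n)$-coloring down to $O(\Delta^2)$ colors via Linial-type steps takes $\Theta(\log^* n)$ rounds, not $O(1)$, since one round only maps $C$ to $O(\Delta^2\log C)$. And feeding the $(\mathrm{degree}+1)$-list coloring subroutine an $O(\Delta^2\log n)$-coloring does not preserve the bound $T$: the known subroutines then need $T+\Theta(\log^*\log n)=T+\Theta(\log^* n)$ rounds, reintroducing exactly the $n$-dependence you are trying to remove. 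The same obstruction hits your ``standard randomized $(2,\beta)$-ruling set'': the constructions you have in mind either presuppose an $O(\Delta^2)$-coloring (hence a $\log^* n$ preprocessing) or have a round complexity that itself depends on $n$.

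The paper's proof sidesteps this by \emph{not} aiming for a globally correct random coloring. Each node picks a color uniformly from a palette of size $c=q\Delta^\alpha$ (independent of $n$) and uncolors itself if it conflicts within its $\alpha$-neighborhood; each node remains colored with probability at least $1-1/q$, again independent of $n$. The deterministic algorithm of Theorem~\ref{thm:fastfractionalcoloring} is then executed on the subgraph induced by the colored nodes, producing a \emph{partial} $(q\Delta:q-1)$-coloring in which every node is uncolored with probability at most $1/q$ and whose round complexity is $O(\alpha^2\log\Delta\cdot T)$ with no $n$-term. Only now is Lemma~\ref{thm:generic} applied (with $f=0$) to convert the per-node failure into a global high-probability guarantee, and then Lemma~\ref{thm:genericB} to derandomize, with $f'=1/(2q)$ so that the final ratio is $q\Delta/\bigl((1-1/q)^2(q-1)\bigr)\le \Delta+\varepsilon$ for $q=\Theta(\Delta/\varepsilon)$. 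The idea you are missing is to accept a per-node failure probability that depends only on $\Delta$ and $\varepsilon$, absorb those local failures through the partial-coloring framework of Lemma~\ref{thm:generic}, and only afterwards amplify and derandomize; trying to make the randomized preprocessing succeed everywhere forces the $n$-dependence back in.
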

\begin{proof}
 On a high level, the $\log^* n$ dependency appears in the time complexity of the algorithm for two reasons:
\begin{itemize}
	\item The algorithm spends $O(\alpha \log^* n)$ rounds for computing the clustering, and this is because, in order to compute an $(\alpha,\beta)$-ruling set, we first need to find an $O(\Delta^{2\alpha})$-coloring of $G^\alpha$.
	\item The algorithm spends $O(\log^* n)$ rounds to find an $O(\Delta^2)$ coloring of $G$. Note that, given an $O(\Delta^{2\alpha})$-coloring of $G^\alpha$, we can reduce this runtime to $O(\log^* \Delta^{2\alpha})$.
\end{itemize}
 Hence, if, in $G^\alpha$, nodes are already provided with an $O(\Delta^{2\alpha})$-coloring, then we can get rid of the $\log^* n$ dependency, obtaining a faster algorithm, that requires only $O(\alpha^2 \log \Delta \cdot T + \log^* \Delta^{2\alpha}) =O(\alpha^2 \log \Delta \cdot T)$ rounds. Unfortunately, finding such a coloring requires $\Omega(\log^* n)$ rounds \cite{linial92}. In order to overcome this issue, we do the following. We first spend a constant number of rounds to find a color randomly, such that each node has a large enough probability of being colored. Then, we run \Cref{thm:fastfractionalcoloring} on the subgraph induced by nodes that are colored correctly. In this way, we obtain an algorithm that computes a partial fractional coloring, satisfying that each node is uncolored with some fixed probability. We can then apply \Cref{thm:generic} and \Cref{thm:genericB} to obtain a deterministic algorithm that finds a proper fractional coloring.

In order to find the required coloring, we proceed as follows. Each node picks a color uniformly at random from a palette of $c$ colors, for some value $c$ to be fixed later. Then, each node $v$ checks its $\alpha$-radius neighborhood, and if there is a node $u$ with the same color, then $v$ becomes uncolored. For any node $u$ contained in the $\alpha$-radius neighborhood of $v$, we have that $P(u \text{ and } v \text{ pick different colors}) \ge 1 - 1/c$. Let $\bar{\Delta} = \Delta^\alpha$ be an upper bound on the degree of $G^\alpha$. Then, $P(u \text{ is colored}) \ge (1 -  1/ c)^{\bar{\Delta}}$, that by the Bernoulli inequality is at least $1 - \bar{\Delta} / c$.
Hence, by applying the algorithm of \Cref{thm:fastfractionalcoloring} on the subgraph induced by colored nodes, we obtain an algorithm that runs in $O(\alpha^2 \log \Delta \cdot T)$ rounds and computes a partial $(q\Delta : q-1)$-coloring where each node is uncolored with probability at most $\bar{\Delta} / c$. 
We can now apply \Cref{thm:generic} (note that, since our algorithm never fails, then $f=0$), obtaining a randomized algorithm that terminates in $O(\alpha^2 \log \Delta \cdot T)$ rounds and computes a $(q\Delta t : (1 - \bar{\Delta} / c) (q-1)t)$-coloring, for some $t$ that depends on the target failure probability $f'$ and $n$. Then, we can apply \Cref{thm:genericB} to obtain a deterministic algorithm that runs in $O(\alpha^2 \log \Delta \cdot T)$ rounds and computes a $(q\Delta t' : (1-2f')(1 - \bar{\Delta} / c)(q-1)t')$-coloring, for some $t'$ that depends on $f'$ and $n$. By fixing $c = q\bar{\Delta}$ and $f' = \frac{1}{2q}$, we obtain a deterministic algorithm for computing a fractional $\frac{q\Delta}{(1-1/q)^2 (q-1)}$-coloring, that, for $q = O(\Delta / \varepsilon)$, is a fractional $(\Delta + O(\varepsilon))$-coloring. The running time, for such a value of $q$, is $O((\log \Delta + \log^2(1/\varepsilon)/\log \Delta)\cdot T)$, as required. 
\end{proof}

\paragraph*{Proof of \Cref{thm:veryfast}}

We now explain how to remove the dependency on $T$, and obtain \Cref{thm:veryfast}. Let $q = \Theta(\Delta/ \varepsilon)$. In order to obtain a faster algorithm, we start by showing that we can replace the dependency on $T$ in \Cref{lem:onecolor}, with $O(\log q)$, at the cost of leaving some nodes uncolored. Hence, we obtain a ``sloppy'' version of \Cref{lem:onecolor}. Recall that $T$ is the time required to run a procedure that solves a $(\mathrm{degree}+1)$-list coloring instance.
Instead of running such a procedure, we run a procedure that partially solves an instance in $O(\log q) = O(\log \frac{\Delta}{\varepsilon})$ rounds, such that a node remains uncolored with probability at most $1/q$.
This can be achieved by letting each node try to pick an available color uniformly at random for $O(\log q)$ times \cite{Johansson99}.

We now show that it is possible to obtain an algorithm that computes a partial $(\Delta : 1)$-coloring satisfying that each node is uncolored with probability at most $2/q$. First, we compute the clustering as in the original algorithm, that is, by applying \Cref{lem:clustering}. Then, on clusters of size at least $q$, we mark a node of the cluster chosen uniformly at random. By applying the sloppy version of \Cref{lem:onecolor}, we obtain what we need.

Finally, by applying \Cref{thm:generic} and \Cref{thm:genericB}, we obtain a deterministic algorithm that solves fractional $(\Delta + O(\varepsilon))$-coloring in time $O((\log\Delta + \frac{\log^2(1/\varepsilon)}{\log\Delta})\cdot \log (\Delta/\varepsilon) )$, proving the theorem.

\subsection{Better support}
We now provide a different algorithm, that has slightly worse running time compared to the one of  \Cref{thm:fastfractionalcoloring}, but that is able to give a fractional $(\Delta+\varepsilon)$-coloring with smaller support. In particular, we will prove the following theorem.
\smallsupport*
For this purpose, we will exploit the following lemma, first presented in \cite{AubryGT14}, and used also in \cite{BousquetEP21}.
\begin{lemma}[Proposition 8 of \cite{AubryGT14}]\label{lem:pathcoloring}
	Let $q \ge 1$ be an integer and let $P = (v_1, \ldots, v_{2q+1})$ be a path. Assume that for $i \in \{1,2q+1\}$ the vertex $v_i$ has a list $L_{v_i}$ of at least $q+1$ colors, and for any $2 \le i \le 2q$, $v_i$ has a list $L_{v_i}$ of at least $2q+1$ colors. Then, each vertex $v_i$ of $P$ can be assigned a subset $S_i \subseteq L_{v_i}$ of $q$ colors, so that adjacent vertices are assigned disjoint sets.
\end{lemma}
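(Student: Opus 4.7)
The plan is to prove the statement by induction on $q$.

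For the base case $q=1$, the path is $v_1 v_2 v_3$ with $|L_{v_1}|, |L_{v_3}| \ge 2$ and $|L_{v_2}| \ge 3$. I would proceed by a greedy left-to-right selection: pick $c_1 \in L_{v_1}$ arbitrarily, then $c_2 \in L_{v_2} \setminus \{c_1\}$ (feasible since $|L_{v_2}| \ge 3$), and finally $c_3 \in L_{v_3} \setminus \{c_2\}$ (feasible since $|L_{v_3}| \ge 2$). Setting $S_i = \{c_i\}$ satisfies the required disjointness and size conditions.

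For the inductive step, assume the statement holds for $q-1 \ge 1$, and consider the path $P = (v_1, \ldots, v_{2q+1})$ with the stated list sizes. The idea is to first produce a proper $1$-coloring $c_1, \ldots, c_{2q+1}$ of $P$ with $c_i \in L_{v_i}$ and $c_i \ne c_{i+1}$, then reduce to a $(q-1)$-list-multicoloring on the same path with lists $L_{v_i}' \coloneq L_{v_i} \setminus \{c_{i-1}, c_i, c_{i+1}\}$ (with the convention $c_0 = c_{2q+2} = \bot$). Applying the inductive hypothesis yields sets $S_i' \subseteq L_{v_i}'$ of size $q-1$ with $S_i' \cap S_{i+1}' = \emptyset$; I would then set $S_i \coloneq \{c_i\} \cup S_i'$. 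This gives $|S_i| = q$, and for adjacent disjointness one checks $S_i \cap S_{i+1} = \emptyset$ by four subcases: $c_i \ne c_{i+1}$ by properness, $c_i \notin S_{i+1}'$ and $c_{i+1} \notin S_i'$ because these colors were explicitly removed from $L_{v_{i+1}}'$ and $L_{v_i}'$ respectively, and $S_i' \cap S_{i+1}' = \emptyset$ by the inductive hypothesis.

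The key technical step is to construct the witness $(c_i)$ so that the reduced lists still meet the size hypothesis for $q-1$, namely $|L_{v_i}'| \ge 2(q-1)+1 = 2q-1$ for internal $i$ and $|L_{v_1}'|, |L_{v_{2q+1}}'| \ge q$. For endpoints this forces $c_2 \notin L_{v_1}$ and $c_{2q} \notin L_{v_{2q+1}}$, which is feasible because (WLOG taking the lists tight) $|L_{v_2} \setminus L_{v_1}| \ge (2q+1) - (q+1) = q \ge 1$, and symmetrically at the other end. For internal $v_i$, one needs at most one of $c_{i-1}, c_{i+1}$ to lie in $L_{v_i}$, so that only two elements of $\{c_{i-1}, c_i, c_{i+1}\}$ are deleted.

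The main obstacle is precisely the coordinated internal selection: when many lists coincide, it may be impossible to avoid simultaneously hitting $L_{v_i}$ with both $c_{i-1}$ and $c_{i+1}$. I expect the cleanest way to handle this is to traverse $P$ once and make the choice of each $c_i$ subject to a look-ahead constraint, phrased as a 2-SAT / alternating-path style argument, or to strengthen the inductive hypothesis so that it also tracks a designated ``reserve'' color at each vertex that can be swapped in when a conflict arises. An alternative route, which I expect works in the symmetric worst cases, is to invoke Hall's theorem on the auxiliary bipartite graph with colors on one side and ``vertex-slots'' on the other, using that any violating set would have to concentrate on an independent set of the path of size $\le q+1$, contradicting the total color mass $q(2q+1)$ supplied by the list-size hypotheses.
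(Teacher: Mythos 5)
Note first that the paper does not prove this lemma; it states it as Proposition~8 of \cite{AubryGT14} and uses it as a black box, so there is no internal proof to compare against. Judging the proposal on its own merits, the induction on $q$ has a structural flaw: the statement at parameter $q-1$ concerns a path on $2(q-1)+1=2q-1$ vertices, yet you invoke it on the original path of $2q+1$ vertices after shrinking the lists. The lemma ties the path length to the parameter, so ``applying the inductive hypothesis'' is not licensed as stated. You would either need to first prove the stronger claim that the same list thresholds ($q+1$ at the endpoints, $2q+1$ internally, yielding $q$ colors per vertex) work for paths of arbitrary length --- a genuinely different and stronger statement that is not established --- or you would need a reduction that also shortens the path by two vertices, and neither is done.

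Second, even setting the length issue aside, the crux of the argument --- constructing a proper $1$-coloring $(c_i)$ with $c_i\in L_{v_i}$ so that every internal $v_i$ loses at most two colors from $\{c_{i-1},c_i,c_{i+1}\}$ and each endpoint loses at most one --- is left open, and you say so yourself. The alternatives you gesture at are not carried through. In particular, the Hall's-theorem route as phrased does not encode the problem: the requirement that each color's set of assigned vertices be an \emph{independent set} of the path is a structural constraint on the assignment, not a per-color capacity, so a bipartite colors-versus-vertex-slots graph with a raw count of ``total color mass'' has no obvious Hall-type condition that implies feasibility here. As written, the proposal is a plausible plan with two unresolved obstacles rather than a proof.
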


Similarly as in the algorithm of \Cref{thm:fastfractionalcoloring}, we start by computing an $(\alpha,(\alpha -1)\alpha \log \Delta)$-ruling set, by computing a $(2,\alpha \log \Delta)$-ruling set on $G^{\alpha-1}$, the $(\alpha-1)$-th power of $G$. This time, we choose $\alpha = 4q+4$. Then, we also compute an $O(\Delta^2)$ coloring. Then, we form clusters by letting each node $u$ join the cluster $C_v$ centered at the nearest ruling set node $v$. By the definition of $(\alpha,\beta)$-ruling set, all nodes at distance at most $\lfloor \alpha/2 - 1 \rfloor$ from $v$ are contained in $C_v$. Hence, in $C_v$, there exists at least one induced path of $\lfloor \alpha/2 - 1 \rfloor$  nodes satisfying that all neighbors of nodes in $P$ are fully contained in $C_v$ (that is, the path obtained by taking $v$ and the nodes contained in some shortest path from $v$ to some node at distance $\lfloor \alpha/2 - 2 \rfloor$ from $v$). Note that $\lfloor \alpha/2 - 1 \rfloor = 2q+1$. Let $S_v$ be the set of nodes of the path.

Similarly as in the proof of \Cref{lem:onecolor}, we can spend $O(\alpha^2 \log \Delta \cdot T)$ rounds to find a partial $\Delta$-coloring that leaves uncolored only nodes in $\bigcup S_v$. This partial coloring can be trivially converted into a partial $(q \Delta : q)$-coloring where each colored node has exactly $q$ colors. Then, for each node $u$ contained in some $S_v$, we can define its list of available colors as $\{1,\ldots,q\Delta+1\} \setminus \{C(z) ~|~ z \in N(u)\}$, where $C(z)$ is the set of colors assigned to node $z$. Note that, from the list of available colors of $u$, we removed at most $q$ colors for each neighbor of $u$. Hence, the endpoints of the path satisfy $|L_u| \ge q+1$ since they have at most $\Delta - 1$ colored neighbors, and the inner nodes satisfy $|L_u| \ge 2q+1$ since they have at most $\Delta - 2$ colored neighbors. Hence, by applying \Cref{lem:pathcoloring} we get that nodes can complete the coloring by brute force, since paths are not connected to each other.

We spend $O(q^2 \log \Delta + q \log^* n)$ rounds to compute a ruling set, $O(\log^* n)$ rounds to compute the $O(\Delta^2)$ coloring. The rest of the algorithm requires $O(q^2 \log \Delta \cdot T)$ rounds. Each node receives $q$ colors, from a palette of total $q\Delta+1$ colors. Hence, the theorem follows.

\section{Approximating the fractional chromatic number}\label{sec:approximation}
In this section, we show that it is possible to find arbitrarily good approximations of the fractional chromatic number. In particular, given a graph $G$ with fractional chromatic number $\chi_f(G) = p/q$, we provide randomized and deterministic algorithms that are able to find a fractional $(1+\varepsilon)\chi_f(G)$-coloring, for any $\varepsilon>0$. Our algorithms use a different amount of total colors, depending on whether the nodes know $p$ and $q$ or not, or if they know $\chi_f(G)$.

On a high level, we show that it is possible to cluster a graph such that each node is unclustered with probability at most $\varepsilon$, and such that any pair of clusters is at least 2 hops apart (that is, for any two nodes that are in different clusters it holds that they do not share an edge). In this way, inside each cluster, we can optimally solve the fractional $\chi_f(G)$-coloring, or find a good-enough approximation if $p$ and $q$ are not known. Then, by applying \Cref{thm:generic} and  \Cref{thm:genericB}, we obtain randomized and deterministic algorithms for computing a fractional coloring that approximates the fractional chromatic number.   

\subsection{Computing a clustering}
In order to obtain a clustering of the graph, we slightly modify the clustering algorithm of Miller, Peng, and Xu~\cite{miller2013parallel} (MPX), to make it compute clusters that are $2$ hops apart, such that each node is unclustered with probability at most $\varepsilon$, and each cluster has weak diameter $O(\log n / \varepsilon)$.
\begin{lemma}\label{lem:MPXmod}
	Given a graph $G=(V,E)$, there is a randomized algorithm that computes a $2$-hops-apart clustering of $G$ such that each node is unclustered with probability at most $\varepsilon$, and each cluster has weak diameter $O(\log n / \varepsilon)$ with high probability. This algorithm terminates in $O(\log n / \varepsilon)$ rounds with high probability.
\end{lemma}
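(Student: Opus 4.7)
The plan is to adapt the exponential-delay clustering of Miller, Peng, and Xu~\cite{miller2013parallel} (MPX) in two ways: first, I will tune the rate of the exponential shifts to be $\Theta(\varepsilon)$ so that each node ends up safely inside its cluster with probability at least $1-\varepsilon$; second, I will explicitly discard boundary nodes so as to enforce the 2-hops-apart property. The classical MPX analysis already yields clusters of weak diameter $O(\log n / \beta)$ with high probability when the shifts are drawn from $\mathrm{Exp}(\beta)$, so choosing $\beta = \Theta(\varepsilon)$ will automatically deliver the claimed diameter bound and running time.

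Concretely, every node $v$ will sample an independent shift $\delta_v \sim \mathrm{Exp}(\beta)$ with $\beta := \varepsilon / 2$. A standard Chernoff-type calculation shows $\max_v \delta_v = O(\log n / \varepsilon)$ with high probability, so conditioning on this event the algorithm can be implemented by gossiping shifts and distances for $O(\log n / \varepsilon)$ rounds. Each node then joins the cluster centered at $u^{\ast}(v) := \arg\min_{u}(\ldist(u,v) - \delta_u)$, and I will mark $v$ as unclustered whenever some neighbor $w$ of $v$ satisfies $u^{\ast}(w) \neq u^{\ast}(v)$. By construction the surviving clusters are pairwise at least $2$ hops apart, and since every cluster is contained in a ball of radius $\max_u \delta_u$ around its center, its weak diameter is $O(\log n / \varepsilon)$ w.h.p.

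The main analytic step will be to bound the probability that a fixed node $v$ is unclustered. Let $u_1 = u^{\ast}(v)$, let $u_2$ be the runner-up center, and define the margin $M := (\ldist(u_2,v) - \delta_{u_2}) - (\ldist(u_1,v) - \delta_{u_1}) \geq 0$. By the triangle inequality, the quantities $\ldist(u,\cdot) - \delta_u$ change by at most $1$ between $v$ and any neighbor $w$, so whenever $M > 2$ the center $u_1$ is still the unique minimizer at every neighbor of $v$, and $v$ survives. The memoryless-property argument from the MPX analysis, which conditions on the order statistics of all shifts other than $\delta_{u_1}$ and uses that an exponential variable conditioned on exceeding a threshold is again exponential, shows that $M$ stochastically dominates an $\mathrm{Exp}(\beta)$ random variable, yielding
\[
\Pr[v \text{ is unclustered}] \;\leq\; \Pr[M \leq 2] \;\leq\; 1 - e^{-2\beta} \;\leq\; 2\beta \;=\; \varepsilon.
\]

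The main obstacle will be making the stochastic-domination claim for $M$ fully rigorous, since the identity of the winner $u_1$ itself depends on the shifts and cannot be fixed in advance. I plan to handle this exactly as in the original MPX low-diameter-decomposition proof, by first revealing the maximum of the shifts among all centers, then invoking the memoryless property on the residual randomness of $\delta_{u_1}$; the only new ingredient beyond their analysis is the explicit conversion from ``safety margin exceeds $2$'' to ``all neighbors of $v$ lie in the same cluster'', which is immediate from the triangle-inequality observation above. Combining this margin bound with the standard weak-diameter and running-time guarantees of MPX will complete the proof of the lemma.
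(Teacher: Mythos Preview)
Your proposal is correct and follows essentially the same approach as the paper: run MPX with exponential shifts of rate $\beta=\varepsilon/2$, then peel off boundary nodes to enforce the $2$-hops-apart condition, and bound the unclustering probability via the memoryless-property margin argument $\Pr[M\le 2]\le 2\beta=\varepsilon$. The only cosmetic difference is that you uncluster every node incident to an intercluster edge, whereas the paper unclusters only the smaller-ID endpoint; both rules yield the same $2$-hops-apart guarantee and are covered by the same margin bound.
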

\begin{proof}
	Since we are going to use a modified version of the MPX procedure, we start by describing the standard MPX procedure.
	
	Each node $u$ chooses independently a \emph{shift} $\delta_u$ from an exponential distribution with parameter $\gamma=\varepsilon/2$. Let the \emph{shifting distance} from $u$ to $v$ be denoted as $\ldist\__\delta(u,v)=\ldist(u,v)-\delta_u$. Each node $v$ is then assigned to the cluster $C_u$ centered at the node $u$ that, among all nodes in $G$, minimizes the value of the shifted distance $\ldist\__\delta(u,v)$, breaking ties arbitrarily. Miller, Peng, and Xu~\cite{miller2013parallel} proved that, with high probability, each cluster has radius $O(\log n /\varepsilon)$. While the original procedure is designed to work in the PRAM model, it is folklore that it can be easily converted into a distributed algorithm that terminates in $O(\log n /\varepsilon)$ rounds with high probability. This procedure obtains a partitioning of the vertices into clusters satisfying the following properties.
	\begin{itemize}
		\item Each cluster is connected, that is, for any two nodes $u$ and $v$ it holds that, if both are in the same cluster $C$, then the nodes in the shortest path between $u$ and $v$ are also in $C$.
		\item Each cluster has strong diameter $O(\log n /\varepsilon)$ with high probability.
		\item Each edge has probability at most $\varepsilon$ to be an intercluster edge.
	\end{itemize}
	We run this procedure, and then we modify the obtained clustering in the following way.\footnote{A similar modification has been used, for example, in \cite{FaourKuhn20}.} Let $\{u,v\}\in E$ be an edge such that its endpoints $u$ and $v$ are on different clusters. For each such edge, we choose the endpoint with the smallest identifier and we remove it from the cluster. We say that these nodes are \emph{unclustered}. After this operation it holds that, for any two nodes that are in different clusters, they do not share an edge, meaning that the clusters are at least $2$ hops apart, as desired. Notice that, by removing nodes from the clusters we may lose the guarantee on the strong diameter of the clusters. However, it still holds that the \emph{weak} diameter of each cluster is $O(\log n /\varepsilon)$ w.h.p., and this is enough for our purposes.
	
	What is left to show is that each node is unclustered with probability at most $\varepsilon$. Consider an unclustered node $u$. Let $C_{u'}$ be the cluster centered at node $u'$ where $u$ belonged to before being removed. Since $u$ is unclustered, it means that there is a node $w$ neighbor of $u$ such that, before removing nodes from the clusters, nodes $u$ and $w$ were assigned to different clusters (otherwise $u$ would not be unclustered). Let $C_{w'}$ be the cluster centered at $w'$ that was initially assigned to node $w$. By construction of these clusters, we have that $\ldist\__\delta(u',u)\le \ldist\__\delta(w',w) + 1$ and, at the same time, $\ldist\__\delta(w',w)\le \ldist\__\delta(u',u) + 1$. Hence, $|\ldist\__\delta(u',u)- \ldist\__\delta(w',w)|\le 1$, implying that  $|\ldist\__\delta(u',u)- \ldist\__\delta(w',u)|\le 2$. In other words, the absolute value of the difference between the smallest and the second smallest shifting distance of an unclustered node is at most $2$. In \cite{miller2013parallel} it has been proven that, for each node, the probability that this event happens is at most $2\gamma=\varepsilon$. We thus obtain that each node is unclustered with probability at most $\varepsilon$.
\end{proof}
In the remaining of this section we use the variables $p'$ and $q'$, that are defined as follows.
\begin{itemize}
	\item If $p$ and $q$ are known to the nodes, then $p'=p$ and $q' = q$.
	\item Otherwise, let $p' = \chi c \log n / \epsilon^2$ and $q' = (1-\varepsilon)p' / \chi_f(G)$, where $\chi = \chi_f(G)$ if $\chi_f(G)$ is known to the nodes, and $\Delta+1$ otherwise.
\end{itemize}

\subsection{Solving a partial fractional coloring}
We now prove that, by using the clustering algorithm of \Cref{lem:MPXmod}, it is possible to find a partial $(p' : q')$-coloring.

\begin{lemma}\label{lem:clustercoloring}
	There exists a randomized $O(\log n / \varepsilon)$-round algorithm $A$ that computes a partial $(p' : q')$-coloring satisfying that, with probability at least $1-1/n$, each node is uncolored with probability at most $\epsilon$.
\end{lemma}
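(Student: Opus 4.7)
The plan is to combine the modified MPX clustering of \Cref{lem:MPXmod} with a completely local solve inside each cluster. First, I would invoke \Cref{lem:MPXmod} to obtain in $O(\log n/\varepsilon)$ rounds a partition of a subset of $V$ into clusters that are pairwise at hop-distance at least $2$ in $G$, such that each node is unclustered with probability at most $\varepsilon$ and, with probability at least $1-1/n$, every cluster has weak diameter $O(\log n/\varepsilon)$. In $O(\log n/\varepsilon)$ additional rounds each node in a cluster $H$ collects the full topology and IDs of $H$ (using that the weak diameter bound holds in $G$).

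Next, all nodes of each cluster $H$ deterministically and locally (using unbounded computation) agree on the same $(p':q')$-coloring of $H$. Because any two clusters are non-adjacent in $G$, combining these per-cluster colorings yields a valid partial $(p':q')$-coloring of $G$, with exactly the unclustered nodes left uncolored.

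The key existence step is that every cluster $H$ admits a $(p':q')$-coloring. If $p,q$ are known we have $(p',q')=(p,q)$ and $H\subseteq G$ inherits the $(p:q)$-coloring of $G$. Otherwise $\chi_f(H)\le\chi_f(G)\le\chi$, so $H$ has some $(P:Q)$-coloring with $P/Q\le\chi$; in particular, one can always start from a $(\Delta+1:1)$-coloring (giving $P\le\Delta+1$), or from a coloring derived from an LP-optimal one when $\chi_f(G)$ is known. Taking $N$ disjoint copies of this coloring on fresh palettes yields an $(NP:NQ)$-coloring, and any integer $N\in[\lceil q'/Q\rceil,\lfloor p'/P\rfloor]$ can be converted to the required $(p':q')$-coloring by padding with unused colors of $[p']$ and, if necessary, trimming each node's set down to exactly $q'$ colors. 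This interval is non-empty since $p'/q'=\chi/(1-\varepsilon)>P/Q$, and its length is at least $\varepsilon p'/P$; the choice $p'=\Theta(\chi\log n/\varepsilon^2)$ ensures this length exceeds $1$ whenever $P=O(\chi\log n/\varepsilon)$, which is guaranteed by the base colorings above. A node is uncolored exactly when it is unclustered, which happens with probability at most $\varepsilon$, while the event that clustering itself fails its diameter guarantee contributes at most the $1/n$ term in the lemma.

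The main obstacle I anticipate is the existence argument for the $(p':q')$-coloring of each cluster with prescribed integer parameters: one must verify that the slack $p'/q'-\chi_f(H)>0$ together with the size of $p'$ leaves enough room to realize the exact support $p'$ and per-node count $q'$. This is where the specific setting $p'=\Theta(\chi\log n/\varepsilon^2)$ and $q'=(1-\varepsilon)p'/\chi$ is used, turning the scaling argument above into a valid construction; the rest of the proof is a straightforward composition of the clustering guarantees with this local step.
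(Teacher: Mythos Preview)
Your high-level plan---run the clustering of \Cref{lem:MPXmod}, then brute-force a $(p':q')$-coloring inside each cluster, leaving unclustered nodes uncolored---is exactly what the paper does. The gap is precisely where you anticipated it: the existence argument for the per-cluster $(p':q')$-coloring.

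You write $p'/q'=\chi/(1-\varepsilon)$, but from the definition $q'=(1-\varepsilon)p'/\chi_f(G)$ one has $p'/q'=\chi_f(G)/(1-\varepsilon)$, not $\chi/(1-\varepsilon)$. This matters in the case $\chi=\Delta+1$ (neither $p,q$ nor $\chi_f(G)$ known): your proposed base coloring $(P:Q)=(\Delta+1:1)$ has $P/Q=\Delta+1$, which in general far exceeds $\chi_f(G)/(1-\varepsilon)$ (take a bipartite graph of large degree, where $\chi_f(G)=2$). The interval $[\lceil q'/Q\rceil,\lfloor p'/P\rfloor]$ is then empty and the copy-and-pad construction yields nothing. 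More generally, your argument needs a $(P:Q)$-coloring of the cluster with both $P/Q\le\chi_f(G)$ \emph{and} $P=O(\chi\log n/\varepsilon)$; a priori the optimal fractional coloring of a cluster may only be realized with enormous support, so the simultaneous smallness of $P$ and of $P/Q$ is exactly the nontrivial point you have not established.

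The paper closes this gap with a sampling argument rather than a scaling one: starting from the global $(p:q)$-coloring of $G$ (with $p$ possibly huge), sample $p'$ colors from $[p]$ with replacement; a Chernoff bound, using $p'=\Theta(\chi\log n/\varepsilon^2)$, shows that every node keeps at least $q'=(1-\varepsilon)p'/\chi_f(G)$ of its colors with probability $1-1/\mathrm{poly}(n)$. By the probabilistic method $G$ (and hence every induced subgraph, in particular every cluster) admits a $(p':q')$-coloring. That is where the specific value of $p'$ is genuinely used.
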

\begin{proof}
	Note that the algorithm described in \Cref{lem:MPXmod} is Las Vegas, but we can turn it into a Monte Carlo algorithm by truncating its execution after $O(\log n / \varepsilon)$ steps, and leave unclustered every node that did not terminate. Since the original algorithm terminates in $O(\log n / \varepsilon)$ rounds with high probability (that is, at least $1-1/n$), then this new algorithm always terminates in $O(\log n / \varepsilon)$  rounds, which is also an upper bound on the diameter of the clusters, and leaves each node unclustered with probability at most $\varepsilon + 1/n$ by a union bound. Hence, by slightly scaling $\varepsilon$, we obtain the same guarantees as the original algorithm.

	Hence, we start by running the (Monte Carlo variant of the) clustering algorithm.
	Then, since each cluster has weak diameter at most $R = O(\log n / \varepsilon)$, we can spend $R$ rounds for computing in parallel, in each cluster, by brute force, a $(p' : q')$-coloring (we will later argue why such a coloring always exists). Note that this is possible even if $q'$ is not known to the nodes, as they can just find the best possible solution. Since unclustered nodes do not get a color, and since clusters are $2$ hops apart, then there are no neighboring nodes that get the same color.
	
	We need to argue why a $(p' : q')$-coloring always exists. Let $G$ be a graph that is $(p : q)$-colored. We show that there is a randomized process that, with non-zero probability, produces a $(p' : q')$-coloring. By the probabilistic method, this implies that $G$ admits a $(p' : q')$-coloring, and hence that also each cluster admits a $(p' : q')$-coloring.
	
	We sample with replacement $p'$ colors from $p$ colors. Consider an arbitrary node $u$. Let $X_i=1$ if, during the $i$-th sampling, we sample a color that node $u$ has among its colors. Otherwise, let $X_i=0$. Since node $u$ has at least $q$ out of the $p$ possible colors, then $P(X_i=1)\ge q/p$. Let $X=\sum_{i=1}^{p'}X_i$. By linearity of expectation, it holds that $\mathbb{E}[X]\ge \frac{p'q}{p}=\frac{p'}{\chi_f(G)} = q' / (1-\varepsilon)$. By a Chernoff bound, we get that 
	
	\[
	P(X\le q') \le e^{-\frac{\varepsilon^2}{2}\cdot\frac{p'}{\chi_f(G)}}
	= e^{-\frac{\varepsilon^2 c \chi \log n}{\varepsilon^2 2 \chi_f(G)}}
	\le n^{-c/2}.
	\]	
	By a union bound, we have that each node has less than $q'$ colors with probability at most $n^{1 - c/2}$. By choosing $c\ge 4$ we get that each node has at least $q'$ colors with probability at least $1-1/n$.
\end{proof}

\subsection{Putting things together}
We now prove the existence of randomized and deterministic algorithms for approximating the fractional chromatic number, with running time $O(\log n / \varepsilon)$. Note that, for $\varepsilon < 1/n$, we can trivially solve the problem by gathering the entire graph on a node and brute forcing a solution. Hence, in the following, assume $\varepsilon \ge 1/n$.

\Cref{lem:clustercoloring} guarantees the existence of a randomized algorithm that runs in $O(\log n / \varepsilon)$ rounds, and with probability at least $1 - f$, where $f = 1/n$, computes a partial $(p' : q')$-coloring satisfying that each node is uncolored with probability at most $\varepsilon$. By applying \Cref{thm:generic}, we obtain that there exists a randomized algorithm, that also runs in $O(\log n / \varepsilon)$ rounds and, with probability at least $1-f'$, where $f' = 1/n^c$ for an arbitrary constant $c \ge 1$, computes a partial $(p'' : q'')$-coloring, where $p'' = p' t$, $q'' = (1- \varepsilon)q' t$, and $t = O(\log n / \varepsilon)$. Hence, we obtain the following.
\randomizedapx*

Then, starting from this algorithm, we can apply \Cref{thm:genericB}, where $f = 1/n^c$ for an arbitrary constant $c\ge1$, and obtain that there exists a deterministic algorithm that also runs in $O(\log n / \varepsilon)$ rounds, and computes a partial $(p''' : q''')$-coloring, where $p''' = p'' t'$, $q''' = (1- 2f)q'' t'$, and $t' = \poly n$. Since $\varepsilon \ge 1/n \ge f$, then this means that, compared to the randomized algorithm, we lose at most an $O(\varepsilon)$ fraction of colors. Hence, we obtain the following.
\detapxA*

\subsection{Less colors}
We now show an alternative way for derandomizing the algorithm of \Cref{thm:randomizedapx}, obtaining the same guarantees on the number of colors, but with a slightly worse running time. For this purpose, we use the following result of Ghaffari, Harris, and Kuhn~\cite{ghaffari2018derandomizing}.
\begin{theorem}[Theorem 1.1 of \cite{ghaffari2018derandomizing}]\label{thm:slocalderandomization}
	Any $r$-round randomized \LOCAL algorithm for a locally checkable problem can be transformed into a deterministic \LOCAL algorithm with complexity $O(r (\log^2 n + \mathrm{ND}))$, where $\mathrm{ND}$ is the time required to compute an $(O(\log n), O(\log n))$-network decomposition.
\end{theorem}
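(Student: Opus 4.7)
The plan is to follow the standard Ghaffari--Harris--Kuhn derandomization recipe: reduce to a few shared random bits, design a pessimistic estimator that decomposes into local contributions, and parallelize the method of conditional expectations via a network decomposition.

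First I would invoke a Newman-style reduction (exactly as in \Cref{lem:lessbits}) to replace the algorithm's private randomness by $O(\log n)$ bits of shared randomness, while maintaining a polynomially small failure probability, say at most $n^{-c}$ for a large constant $c$. Since the problem is locally checkable, correctness fails if and only if some constant-radius check at some node $v$ rejects, and that check is determined by the $O(r)$-hop neighborhood of $v$. Hence the global failure event decomposes as a union of at most $n$ local bad events $B_v$, and the quantity $\Phi := \sum_{v} \Pr[B_v \mid \text{bits fixed so far}]$ is a pessimistic estimator: initially $\E[\Phi] < 1$, it is always an upper bound on the current failure probability, and to finish we only need to fix all bits without ever increasing $\Phi$.

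Next I would derandomize bit by bit using the method of conditional expectations. For each of the $O(\log n)$ shared random bits, we must pick a value in $\{0,1\}$ that keeps $\Phi$ non-increasing. Since each $\Pr[B_v \mid \cdot]$ depends only on the $r$-hop neighborhood of $v$, the change in $\Phi$ caused by setting a single bit can be evaluated locally in $O(r)$ rounds. To do this in parallel, compute an $(O(\log n), O(\log n))$-network decomposition in $\mathrm{ND}$ rounds and assign each random bit to a cluster that \emph{owns} it. Process color classes one at a time: within a single color class every cluster independently fixes its owned bits by aggregating, along the cluster's spanning tree, the conditional-expectation differences contributed by the affected $B_v$'s (an $O(D + r)$-round operation, where $D = O(\log n)$ is the cluster diameter), picking the better assignment, and broadcasting it. Summing the per-bit cost of $O(\log n + r)$ over $O(\log n)$ bits per class and $O(\log n)$ color classes gives $O(r \log^2 n + \log^3 n)$, and adding $\mathrm{ND}$ for the decomposition yields the claimed $O(r(\log^2 n + \mathrm{ND}))$ bound.

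The main obstacle is the bookkeeping: we must ensure that when several clusters in the same color class fix bits simultaneously, their greedy choices \emph{do} combine into a globally valid step of the method of conditional expectations. The resolution is that for any single bit $b$, only those $B_v$ whose $r$-hop neighborhood reads $b$ are affected by the value of $b$, and these nodes $v$ all lie within $O(r)$ of the cluster owning $b$. Taking the network decomposition on $G^{\Theta(r)}$ (equivalently, blowing up cluster separation by a factor of $r$, an overhead already absorbed in the final bound) makes these \emph{affected} sets disjoint across clusters of the same color class, so per-cluster choices touch disjoint summands of $\Phi$ and their concatenation is still non-increasing. This disjointness, together with local checkability, is precisely what lets the network decomposition replace what would otherwise be a sequential $\Omega(n)$-round bit-by-bit derandomization by a $\poly\log n$-round parallel one.
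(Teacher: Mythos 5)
This theorem is quoted from Ghaffari, Harris, and Kuhn~\cite{ghaffari2018derandomizing} and is not re-proved in this paper, so I am judging your sketch as a free-standing reconstruction. The skeleton is right --- a pessimistic estimator $\Phi = \sum_v \Pr[B_v]$ decomposing the failure event into local bad events, the method of conditional expectations, a network decomposition of $G^{\Theta(r)}$ so that simultaneous greedy choices inside one color class are non-interacting, and bookkeeping that lands at $O(r(\log^2 n + \mathrm{ND}))$ --- but the step you yourself single out as the main obstacle is resolved with a false claim.

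After a Newman-style reduction the seed is truly shared: every node's output, and hence every bad event $B_v$, depends on \emph{every} bit $b$ of the seed. Your resolution --- ``for any single bit $b$, only those $B_v$ whose $r$-hop neighborhood reads $b$ are affected by the value of $b$, and these nodes $v$ all lie within $O(r)$ of the cluster owning $b$'' --- is therefore not true. Evaluating the change in $\Phi$ caused by fixing a single shared bit is a sum over all of $V$ and requires an $\Omega(\mathrm{diameter})$ aggregation, not an $O(r)$-local one, so it is not covered by the claimed bound; and letting one cluster unilaterally set a bit that every $B_v$ reads ignores the effect of that choice on the rest of the graph. Assigning the $O(\log n)$ globally-read bits to ``owning'' clusters does not create a locality that is not actually there.

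The fix is to drop the Newman reduction and derandomize the per-node \emph{private} random strings $s_u$ directly. There the locality you want is genuine: $B_v$ is determined by the topology and randomness inside a bounded-radius neighborhood of $v$, so fixing $s_u$ changes $\Pr[B_v]$ only for $v$ within $O(r)$ of $u$. Unbounded local computation in the \LOCAL model lets a cluster optimize over the (possibly enormous) joint assignment space of its members' strings without enumeration, so the lengths of the $s_u$ are irrelevant to the round complexity. With that locality in hand the rest of your argument composes exactly as you describe: compute the $(O(\log n),O(\log n))$-decomposition of $G^{\Theta(r)}$ in $O(r\cdot\mathrm{ND})$ rounds, process the $O(\log n)$ color classes sequentially, and within a class let every cluster simultaneously fix its members' strings by aggregating the affected $\Pr[B_v]$ terms over its $O(r\log n)$-diameter region; clusters of the same color touch disjoint summands of $\Phi$, so the greedy choices compose, giving $O(r\log^2 n + r\cdot\mathrm{ND}) = O(r(\log^2 n+\mathrm{ND}))$.
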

We note that Theorem 1.1 of \cite{ghaffari2018derandomizing} applies to randomized algorithms that satisfy the following requirements.
\begin{enumerate}
	\item They always terminate within $r$ rounds.
	\item Each node sets a flag to $1$ if its output is incorrect, and to $0$ otherwise. With high probability, the flag should be $0$.
	\item Each node should be able to check, in $O(1)$ rounds, whether its flag has the correct value.
\end{enumerate}
In order to apply \Cref{thm:slocalderandomization} and derandomize the algorithm of \Cref{thm:randomizedapx}, we need to show that we can tweak the algorithm to satisfy the above requirements. The algorithm presented in \Cref{thm:randomizedapx} 
clearly satisfies the first requirement. However, it is not compatible with the second one, since the guarantee on the quality of the coloring does not always hold. More precisely, the number of colors obtained by a node only holds with high probability, and for this reason, if $q'$ is not known, a node may not notice that its output is incorrect, and it fails to set its flag correctly. Let us see how to handle this issue.

If $q'$ is known to the nodes, then they can just set their flag to $1$ if they have strictly less than $(1-\varepsilon)q't$ colors, and $0$ otherwise. If $q'$ is not known, we perform a preprocessing step to compute a value of $q'$ (that may be different for different nodes), that will then be used by the nodes to decide whether to set their flag or not.
The preprocessing step works as follows.  Let $T$ be the running time of the algorithm that we want to derandomize. Each node $v$ spends $2T$ rounds to gather its $2T$-radius neighborhood. Then, it checks, in that neighborhood, what is the maximum value $q'$ for which there exists a $(p't : (1-\varepsilon)q't)$-coloring. Observe that the obtained value $q'$ is a lower bound on the number of colors that $v$, while executing the algorithm of \Cref{thm:randomizedapx}, is able to obtain when it belongs to a cluster. This follows from the fact that, any cluster where $v$ belongs to must be fully contained in its $2T$-radius neighborhood. Note that the preprocessing step also guarantees that the third requirement of Theorem 1.1 of \cite{ghaffari2018derandomizing} is satisfied, since each node can just check whether the flag is set correctly depending on the number of colors that it has and the value of $q'$.

Hence, by combining \Cref{thm:slocalderandomization} with the obtained variant of the algorithm of \Cref{thm:randomizedapx}, we obtain the following theorem.
\detapxB*

\section{Lower bound}\label{sec:lowerbound}
In this section, we show a lower bound of $\Omega(\log n / \varepsilon)$ rounds for computing a fractional $(2+\varepsilon)$-coloring, which holds already on trees, and even for randomized algorithms. We first show that there exist graphs with girth $\Omega(\log n/\varepsilon)$ and fractional chromatic number strictly larger than $2 + \varepsilon$. Then we argue that, if we take an algorithm that, on trees, achieves a fractional $(2+\varepsilon)$-coloring in $o(\log n/\varepsilon)$ rounds, and we execute it on such graphs, it must fail, since the obtained fractional coloring would be too good.
We get our contradiction on the existence of such an algorithm by arguing that, in time $o(\log n/\varepsilon)$, a node cannot distinguish whether it is on a tree or on these graphs.

We start by describing a graph family of interest. Let $\mathcal{G^*}$ be a graph family that contains all graphs with $n$ nodes, $m=O(n)$ edges, girth $\Omega(\log n)$, and where the largest independent set has size at most $n/c$, for some large constant $c$, and for infinite values of $n$. Such a family of graphs is known to exist \cite{LubotzkyPS88}. Starting from a graph $G=(V,E)\in\mathcal{G^*}$, we construct a graph $H$ by replacing all edges of $G$ with a path of length $2k+1$, for some $k=\Theta(1/\varepsilon)$. In other words, let $e=\{u,v\}$ be an edge in $G$. We replace $e$ by a path $P^e=(u,w^e_1,w^e_2,\dotsc,w^e_{2k},v)$. We refer to nodes in $P^e$ as \emph{path nodes}, and we refer to the $w^e_1,w^e_2,\dotsc,w^e_{2k}$ nodes as \emph{inner path nodes}. Let $\mathcal{G}$ be the family that contains all such graphs. By construction, a graph $H\in\mathcal{G}$ has $N=n+2km$ nodes and girth $\Omega(\log n / \varepsilon)$. We now show the following lemma about the fractional chromatic number of these graphs.

\begin{lemma}
	Let $H$ be a graph in $\mathcal{G}$. The fractional chromatic number of $H$ is strictly larger than $2 /(1-c'\varepsilon)$, for some constant $c'>0$.
\end{lemma}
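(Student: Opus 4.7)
I would use the standard lower bound $\chi_f(H) \ge |V(H)|/\alpha(H)$, where $\alpha$ denotes the independence number, so the task reduces to a good upper bound on $\alpha(H)$. Since $|V(H)| = n + 2km$, it suffices to show $\alpha(H) \le km + n/c$; direct calculation with $k = \Theta(1/\varepsilon)$ and the fact that $m = \Theta(n)$ (which holds because the LPS-type graphs in $\mathcal{G}^{\ast}$ are regular of constant degree) will then yield $\chi_f(H) \ge 2 + \Omega(\varepsilon)$, which in turn exceeds $2/(1-c'\varepsilon)$ for a suitable constant $c'>0$.

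To bound $\alpha(H)$, I would take an arbitrary independent set $I$ of $H$ and set $S := I \cap V$. For each edge $e = \{u,v\} \in E(G)$, consider how many of the $2k$ inner nodes of $P^e$ lie in $I$. If both $u$ and $v$ are in $S$, then $w^e_1$ and $w^e_{2k}$ are blocked, so the inner path contributes at most $k-1$ vertices to $I$; otherwise it contributes at most $k$. Summing over all $m$ edges gives
\[
|I| \;\le\; |S| + km - |E(G[S])|.
\]

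The main (and essentially only nontrivial) step is to show $|S| - |E(G[S])| \le \alpha(G)$. For this, I would observe that $G[S]$ has $|S|$ vertices and $|E(G[S])|$ edges and therefore at least $|S| - |E(G[S])|$ connected components. Picking one vertex from each component produces an independent set of $G$ of that size, hence $|S| - |E(G[S])| \le \alpha(G) \le n/c$ by the defining property of $\mathcal{G}^{\ast}$. Combined with the previous bound this yields $\alpha(H) \le km + n/c$, as required.

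Finally, from $\chi_f(H) \ge |V(H)|/\alpha(H)$ and writing $m = \beta n$ with $k = \Theta(1/\varepsilon)$, one obtains
\[
\chi_f(H) \;\ge\; \frac{n + 2km}{km + n/c} \;=\; \frac{2 + n/(km)}{1 + n/(ckm)}.
\]
Since $n/(km) = \Theta(\varepsilon)$, a first-order expansion gives $\chi_f(H) \ge 2 + (1 - 2/c)\,\Theta(\varepsilon)$. Provided $c > 2$ (which can be arranged by choosing the constant in the definition of $\mathcal{G}^{\ast}$ sufficiently large), the coefficient is a positive constant, so $\chi_f(H) \ge 2 + \Omega(\varepsilon)$; an elementary manipulation then converts this lower bound to $2/(1-c'\varepsilon)$ for an appropriate $c'>0$ and all sufficiently small $\varepsilon$, completing the proof.
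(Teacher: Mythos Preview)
Your proof is correct and reaches the same bound $\alpha(H)\le km+n/c$ as the paper, but by a different route. The paper argues by \emph{modifying} an arbitrary independent set $S$ of $H$ into a set $S'\supseteq$-in-size with exactly $k$ inner vertices on every subdivided edge, which forces $S'\cap V$ to be independent in $G$ and hence of size at most $n/c$. You instead keep $S$ fixed and count: edges of $G$ with both endpoints in $S$ lose an inner vertex, giving $|I|\le |S|+km-|E(G[S])|$, and then the component-counting observation $|S|-|E(G[S])|\le \alpha(G)$ closes the gap. Your argument is a bit slicker (no case analysis, no set surgery) and makes the role of $\alpha(G)$ transparent; the paper's exchange argument is perhaps more intuitive and yields the slightly stronger structural fact that a \emph{maximum} independent set of $H$ can be taken to project to an independent set of $G$. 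One cosmetic remark: your final detour through ``$2+\Omega(\varepsilon)$ and then convert to $2/(1-c'\varepsilon)$'' is unnecessary---your own computation $\alpha(H)\le km+n/c = N/2 - (c-2)n/(2c)$ already gives $\chi_f(H)\ge 2/(1-(c-2)n/(cN))$ directly, and $n/N=\Theta(\varepsilon)$ finishes it in the stated form.
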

\begin{proof}
	Let $S$ be any independent set of $H$. We modify $S$ and compute a new independent set $S'$ of $H$ such that $|S'|\ge|S|$ and $S'$ contains exactly $mk$ inner path nodes. Note that this implies that for each path $P^e=(u,w^e_1,w^e_2,\dotsc,w^e_{2k},v)$, at most one of the two endpoints can be in $S'$. 
	Let $P^e=(u,w^e_1,w^e_2,\dotsc,w^e_{2k},v)$ be a path in $H$ where at most $k-1$ inner path nodes are in $S$. There are two cases: either at most one of the two endpoints is in $S$, or both $u$ and $v$ are in $S$. 
	
	In the former case, we modify $S$ in the following way. W.l.o.g., let $u$ be a node not in $S$. We start from $u$ and compute an optimal independent set inside $P^e$ sequentially, by starting with $w^e_1$ in the independent set and then putting every other node in the set. This procedure puts in the independent set $k$ inner path nodes of $P^e$, and note that the obtained set is still independent.
	
	In the latter case, we remove node $u$ from the set, and then we proceed as in the former case. Note that the obtained set is still independent, and it is not smaller. In fact, before the modification, there were at most $k - 1$ nodes of $P^e \setminus \{u, v\}$ in $S$, and hence at most $k$ nodes of $P^e \setminus \{v\}$ in $S$, and after the modification we have $k$ nodes of $P^e \setminus \{u,v\}$ in the set.
	
	We call $S'$ the independent set resulting from the above operations. Notice that $|S'|\ge|S|$. Recall that each graph in $\mathcal{G}$ is obtained from a graph $G \in \mathcal{G}^*$ where the largest independent set has size at most $n/c$. Note also that, by construction, if we project $S'$ onto $G$, then we obtain and independent set of $G$.
	
	Hence, 
	\begin{align*}
		|S'| &\le \frac{n}{c} + mk\\
		& =\frac{N}{2} - \frac{(c-2)n}{2c}\\
		& = \frac{N}{2}\left(1 - \frac{2(c-2)n}{2cN}\right)\\
		& = \frac{N}{2}\left(1 - \frac{(c-2)n}{c(n + 2km)}\right)\\
		& = \frac{N}{2}\left(1 - \frac{(c-2)n}{c(n + 2\frac{c_1}{\varepsilon} c_2n)}\right) \mbox{ where } k=\frac{c_1}{\varepsilon} \mbox{ and } m=c_2n\\
		& = \frac{N}{2}\left(1 - \frac{c-2}{c(1 + 2\frac{c_1}{\varepsilon} c_2)}\right)\\
		& < \frac{N}{2}\left(1 - c_3\varepsilon\right), \mbox{ for some constant $c_3 > 0$ that depends on $c_1$ and $c$.}
	\end{align*}
	Hence, any color can be assigned to strictly less than a fraction $(1 - c_3 \varepsilon)/2$ of the nodes, implying that the fractional chromatic number of $H$ is strictly larger than $2/ ( 1 - c_3\varepsilon)$.
\end{proof}
From the above lemma we get the following corollary.

\begin{corollary}\label{cor:lb-graphs}
	There exists an infinite family of graphs of girth $\Omega(\log n/\varepsilon)$ and fractional chromatic number strictly larger than $2+\epsilon$.
\end{corollary}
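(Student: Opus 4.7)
My plan is to derive the corollary directly from the preceding lemma by verifying three small bookkeeping points.

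First, I would check that the family $\mathcal{G}$ is infinite. Since $\mathcal{G}^*$ contains $n$-vertex graphs for infinitely many $n$, and the construction $G \mapsto H$ replaces each of the $m = O(n)$ edges of $G\in\mathcal{G}^*$ by a path of length $2k+1$ with $k=\Theta(1/\varepsilon)$, the resulting graph $H$ has $N = n + 2km = \Theta(n/\varepsilon)$ vertices, giving infinitely many admissible values of $N$.

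Second, I would compute the girth of $H$. Every cycle in $H$ arises from a cycle of $G$ by blowing each edge up to a path of length $2k+1$, so $\mathrm{girth}(H) = (2k+1)\cdot\mathrm{girth}(G) = \Omega(\log n/\varepsilon)$. Since $N = \Theta(n/\varepsilon)$, and the target lower bound is vacuous when $\varepsilon < 1/\poly(n)$, we have $\log N = \Theta(\log n)$ and hence $\mathrm{girth}(H) = \Omega(\log N/\varepsilon)$, as required by the corollary's indexing in the number of vertices.

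Third, I would upgrade the lemma's bound $\chi_f(H) > 2/(1-c'\varepsilon)$ to $\chi_f(H) > 2+\varepsilon$. Expanding, $2/(1-c'\varepsilon) \geq 2 + 2c'\varepsilon$ for $\varepsilon < 1/c'$, so invoking the lemma with $\varepsilon$ replaced by $\varepsilon/(2c')$ delivers $\chi_f(H) > 2 + \varepsilon$. This rescaling only perturbs the constant hidden in $k = \Theta(1/\varepsilon)$, so the two preceding properties remain intact. The genuine work is done in the preceding lemma, and the only ``obstacle'' here is this constant-factor rescaling plus the conversion between the vertex count $n$ of the underlying $\mathcal{G}^*$-graph and the vertex count $N$ of the blown-up graph $H$; once both are handled, the corollary follows immediately.
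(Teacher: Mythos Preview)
Your proposal is correct and matches the paper's approach: the paper gives no explicit proof of the corollary, simply stating ``From the above lemma we get the following corollary,'' and the girth bound $\Omega(\log n/\varepsilon)$ for $H$ is already asserted in the construction preceding the lemma. You have faithfully filled in the three bookkeeping details (infinitude, girth in terms of $N$, and the constant-factor rescaling from $2/(1-c'\varepsilon)$ to $2+\varepsilon$) that the paper leaves implicit.
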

We are now ready to prove our main theorem.

\begin{theorem}
	Computing a $(2+\epsilon)$-fractional coloring on trees in the \LOCAL model requires $\Omega(\log n / \varepsilon)$, even for randomized algorithms.
\end{theorem}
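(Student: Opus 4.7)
The plan is to combine the preceding construction with a standard view-indistinguishability argument. Suppose toward contradiction that some randomized \LOCAL algorithm $A$ computes a valid fractional $(2+\varepsilon)$-coloring on every $n$-node tree in $T = o(\log n / \varepsilon)$ rounds, succeeding with probability at least $1 - 1/n^c$ for an arbitrarily large constant $c$ (the standard high-probability guarantee, which can be boosted without changing the asymptotics of $T$). Using \Cref{cor:lb-graphs}, I would pick a graph $H \in \mathcal{G}$ with $N$ nodes and girth $g = \Omega(\log N / \varepsilon)$; choosing $N$ large enough together with $T = o(\log N/\varepsilon)$ guarantees $g > 2T+1$, while still $\chi_f(H) > 2 + \varepsilon$.

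Next, I would argue that locally $H$ is indistinguishable from a tree. For every edge $\{u,v\}$ of $H$, the union of the $T$-hop balls around $u$ and $v$ is acyclic, since any cycle contained there would have length at most $2T+1 < g$. Hence this induced subgraph is a tree, which can be extended to a tree $\widetilde T_{uv}$ on exactly $N$ nodes by attaching additional branches and assigning previously-unused identifiers, arranged so that the $T$-radius views of $u$ and $v$ in $\widetilde T_{uv}$ coincide with their views in $H$. Because a node's output in a $T$-round \LOCAL algorithm is a function only of its $T$-radius view and its private random bits, coupling the random bits node-by-node makes the joint distribution of the outputs at $u$ and $v$ identical whether $A$ is run on $H$ or on $\widetilde T_{uv}$.

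I would then transfer success on trees to each local constraint of $H$. On $\widetilde T_{uv}$, the algorithm produces a valid fractional $(2+\varepsilon)$-coloring with probability at least $1 - 1/N^c$, so in particular $u$ and $v$ each receive at least $q$ colors from a palette of size $p$ with $p/q \le 2+\varepsilon$ and their color sets are disjoint, all with probability at least $1 - 1/N^c$; by the coupling the same holds when $A$ is run on $H$. A union bound over the $O(N)$ edges of $H$ (for the disjointness constraint) and the $N$ nodes (for receiving at least $q$ colors) shows that with probability $1 - O(N^{1-c}) = 1 - o(1)$, the execution of $A$ on $H$ yields a valid fractional $(p:q)$-coloring with $p/q \le 2+\varepsilon$. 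Since this probability is strictly positive, $H$ itself admits a fractional $(2+\varepsilon)$-coloring, contradicting $\chi_f(H) > 2 + \varepsilon$.

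The main subtlety I expect is the handling of the randomized model: one must boost $c$ large enough for the union bound over $\Theta(N)$ local constraints to still leave positive probability, and the coupling argument must be set up so that the $T$-radius views (including identifiers) in $H$ and in $\widetilde T_{uv}$ match exactly. All other ingredients --- the subdivided high-girth construction with large $\chi_f$ from \Cref{cor:lb-graphs}, the view-boundedness of \LOCAL algorithms, and the acyclicity of short-radius balls in high-girth graphs --- are already in place from the preceding material, so the work reduces to carrying out this indistinguishability plus union-bound step carefully.
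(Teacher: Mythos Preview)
Your proposal is correct and follows essentially the same indistinguishability argument as the paper: both use \Cref{cor:lb-graphs} to obtain a high-girth graph with $\chi_f > 2+\varepsilon$, exploit that $T$-radius neighborhoods are trees, and derive a contradiction by transferring the algorithm's behavior between the graph and suitably constructed $N$-node trees. The only notable difference is that the paper presents the deterministic case (pick a failing edge in $H$, embed its neighborhood into a tree) and defers the randomized extension to ``standard techniques,'' whereas you spell out the randomized case directly via coupling and a union bound over all local constraints---which is precisely the standard technique the paper alludes to.
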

\begin{proof}
	The proof follows a standard indistinguishability argument, already used to prove, e.g., that coloring trees with $o(\Delta / \log \Delta)$ colors requires $\Omega(\log_\Delta n)$ rounds, even for randomized algorithms \cite{linial92}. For simplicity we prove our claim for the deterministic case, but it can be extended to the randomized case with standard techniques.
	
	Suppose there exists an algorithm $A_T$ that, on trees, computes a fractional $(2 + \varepsilon)$-coloring in $o(\log n / \varepsilon)$ rounds. Let $G$ be a graph satisfying \Cref{cor:lb-graphs}. In $o(\log n/\varepsilon)$ rounds, each node in $G$ does not see any cycle, and hence we could run algorithm $A_T$ on $G$ and it would not notice that it is being run on a graph that is not a tree. However, $A_T$ must fail on $G$, since, by \Cref{cor:lb-graphs}, the fractional chromatic number of $G$ is strictly larger than $2 + \varepsilon$. Hence, suppose we run $A_T$ on $G$ and it fails on the neighboring nodes $u$ and $v$ who, as an output of $A_T$, got some common color (note that the algorithm may also fail on just one node by assigning to it too few colors, but the lower bound argument follows in the same way).
	
	Recall that a $t$-round algorithm in the \LOCAL model can be seen as a mapping from $t$-radius neighborhoods into outputs.
	Let $B_u$ and $B_v$ be the views of radius $t$ of nodes $u$ and $v$, respectively, that $A_T$ then maps into the outputs of $u$ and $v$. Let $B=B_u\cup B_v$. The subgraph in $B$ does not contain cycles and it has radius $o(\log n/\varepsilon)$.
	Starting from $B$, we construct a tree that contains $B$, and where we add additional nodes in order to obtain an $n$-node tree $T$ (note that the additional nodes can be added without altering the $t$-radius views of $u$ and $v$, since, in $B$, there exists at least one node at distance at least $t$ from both of them). Nodes $u$ and $v$ in $T$ have the same exact view as in $G$, hence they output the same improper fractional coloring, meaning that $A_T$ fails on $T$, which is a contradiction. Hence, $A_T$ cannot exist, proving the theorem.
\end{proof}

\section{Grids}\label{sec:grids}
In \cite{BousquetEP21}, it has been shown that, for any constant $\epsilon$ and $d$, in $d$-dimensional grids, it is possible to compute a fractional $(2+\varepsilon)$-coloring in time $O(\log^* n)$.  We show that the same problem can be solved in constant time. 

The algorithm of \cite{BousquetEP21} computes a $(2q+4 \cdot 6^d : q)$-coloring that runs in $O(d \ell (2\ell)^d + d \ell \log^* n)$ rounds, where $\ell = q + 2 \cdot 6^d$. The running time is dominated by the time required to compute a maximal independent set on $G^{\ell}$, where the distance is taken w.r.t. the infinity norm (that is, for two nodes $u$ and $v$ with coordinates $(u_1,\ldots,u_d)$ and $(v_1,\ldots,v_d)$ their distance is $\max_{1\le i \le d}\{|u_i - v_i|\}$). In fact, a maximal independent set can be computed in time $O(\Delta + \log^* n)$, and on $G^{\ell}$ we have that $\Delta = (2\ell +1)^{d}$, and hence there is an overhead of $O(d \ell)$ in the runtime. After computing the independent set, the rest of the algorithm requires just $O(d\ell(2\ell)^d)$ rounds. By setting $q = 2^{O(d + \log \frac{1}{\varepsilon})}$, this algorithm gives a fractional $(2 + \varepsilon)$-coloring in time $ T_{\mathrm{coloring,n}} + T_{\mathrm{rest}}$, where $T_{\mathrm{coloring}, c} = 2^{O(d + \log \frac{1}{\varepsilon})}\log^* c$ and  $T_{\mathrm{rest}} = 2^{O(d^2 + d \log \frac{1}{\varepsilon})}$.
Similarly as in the proof of \Cref{lem:veryfastintermediate}, we can replace the $\log^* n$ dependency with $\log^* c$, if nodes are provided with a distance-$d \ell$ $c$-coloring.

We compute a partial distance-$d \ell$ coloring by letting nodes pick a color uniformly at random. Nodes that obtain an invalid coloring, uncolor themselves. We would like to execute the algorithm of \cite{BousquetEP21} on the subgraph induced by colored nodes, but we cannot, since the subgraph is not a grid anymore. In order to solve this issue, we consider only nodes satisfying that, within their running time, they cannot notice that the graph is not a grid. We call these nodes \emph{happy}. In other words, a node is happy if and only if, within the running time of the algorithm, it does not see any uncolored node. On these nodes, by a standard indistinguishability argument, the algorithm must work correctly. Note that this running time depends on $c$, but we will pick a value of $c$ satisfying that the total running time is anyways strictly less than $k T_{\mathrm{rest}}$, for some large enough constant $k$. The probability that a node is colored is at least $1 - \Delta/c$. Since a node sees at most $d^{kT_{\mathrm{rest}}}$ nodes within its running time, the probability that a node is happy is at least $1 - \frac{\Delta}{c} d^{k T_{\mathrm{rest}}}$, meaning that a node is unhappy with probability at most $\frac{\Delta}{c} d^{k T_{\mathrm{rest}}}$. We want this probability to be at most $\varepsilon$, and for that, we can pick $c = \Delta d^{k T_{\mathrm{rest}}} / \varepsilon$. Note that, for such a value of $c$, $T_{\mathrm{coloring}, c} + T_{\mathrm{rest}} \le k T_{\mathrm{rest}}$, as required. 

Hence, there is an algorithm that in $2^{O(d^2 + d \log \frac{1}{\varepsilon})}$ rounds computes a partial fractional $(2+\varepsilon)$-coloring satisfying that each node is uncolored with probability at most $\varepsilon$. By applying \Cref{thm:generic} and \Cref{thm:genericB}, we obtain the following.
\griddet*

\bibliography{references}
\newpage
\appendix

\end{document}